\providecommand{\keywords}[1]
{
	\small
	\textbf{\textit{Keywords---}} #1
}
\newcommand{\HP}{{\sc Hamiltonian Path}\xspace}
\newcommand{\NP}{\ensuremath{\mathsf{NP}}\xspace}
\newcommand{\NPC}{\ensuremath{\mathsf{NP}}\text{-complete}\xspace}
\newcommand{\NPH}{\ensuremath{\mathsf{NP}}\text{-hard}\xspace}
\newcommand{\PNPH}{para-\ensuremath{\mathsf{NP}\text{-hard}}\xspace}
\newcommand{\el}{\ensuremath{\ell}\xspace}
\newcommand{\FPT}{\ensuremath{\mathsf{FPT}}\xspace}
\newcommand{\APX}{\ensuremath{\mathsf{APX}}\xspace}
\newcommand{\OPT}{{\text{\sc OPT}}\xspace}
\let\oldlambda\lambda
\renewcommand{\lambda}{\ensuremath{\oldlambda}\xspace}
\let\oldalpha\alpha
\renewcommand{\alpha}{\ensuremath{\oldalpha}\xspace}
\let\oldDelta\Delta
\renewcommand{\Delta}{\ensuremath{\oldDelta}\xspace}
\newcommand{\YES}{{\sc yes}\xspace}
\newcommand{\NO}{{\sc no}\xspace}
\newtheorem{observation}{Observation}
\newtheorem{theorem}{Theorem}
\newtheorem{lemma}{Lemma}
\newtheorem{corollary}{Corollary}
\newtheorem{definition}{Definition}
\renewcommand{\AA}{\ensuremath{\mathcal A}\xspace}
\newcommand{\BB}{\ensuremath{\mathcal B}\xspace}
\newcommand{\CC}{\ensuremath{\mathcal C}\xspace}
\newcommand{\DD}{\ensuremath{\mathcal D}\xspace}
\newcommand{\EE}{\ensuremath{\mathcal E}\xspace}
\newcommand{\FF}{\ensuremath{\mathcal F}\xspace}
\newcommand{\GG}{\ensuremath{\mathcal G}\xspace}
\newcommand{\HH}{\ensuremath{\mathcal H}\xspace}
\newcommand{\II}{\ensuremath{\mathcal I}\xspace}
\newcommand{\LL}{\ensuremath{\mathcal L}\xspace}
\newcommand{\OO}{\ensuremath{\mathcal O}\xspace}
\newcommand{\PP}{\ensuremath{\mathcal P}\xspace}
\newcommand{\QQ}{\ensuremath{\mathcal Q}\xspace}
\renewcommand{\SS}{\ensuremath{\mathcal S}\xspace}
\newcommand{\TT}{\ensuremath{\mathcal T}\xspace}
\newcommand{\UU}{\ensuremath{\mathcal U}\xspace}
\newcommand{\VV}{\ensuremath{\mathcal V}\xspace}
\newcommand{\WW}{\ensuremath{\mathcal W}\xspace}
\newcommand{\NB}{\ensuremath{\mathbb N}\xspace}
\newcommand{\pr}{\ensuremath{\prime}}
\newcommand{\prr}{\ensuremath{{\prime\prime}}}
\newcommand{\DEL}{{\text{\sc Del}}\xspace}
\newcommand{\caveat}{\ensuremath{\NP \subseteq \coNP\text{/poly}}\xspace}
\newcommand{\KE}{{\sc Kidney Exchange}\xspace}
\newcommand{\MSTE}{{\sc Max Size $\leq$3-Way Exchange}\xspace}
\newcommand{\TSP}{{\sc 3-Set Packing}\xspace}
\newcommand{\coNP}{\ensuremath{\mathsf{co-NP}}\xspace}
\newcommand{\eps}{\ensuremath{\varepsilon}\xspace}
\newcommand{\tw}{\ensuremath{\tau}\xspace}
\let\mydelta\delta
\renewcommand{\delta}{\ensuremath{\mydelta}\xspace}
\let\mytau\tau
\renewcommand{\tau}{\ensuremath{\mytau}\xspace}
\let\mytheta\theta
\renewcommand{\theta}{\ensuremath{\mytheta}\xspace}
\let\mygamma\gamma
\renewcommand{\gamma}{\ensuremath{\mygamma}\xspace}
\let\myGamma\Gamma
\renewcommand{\Gamma}{\ensuremath{\myGamma}\xspace}
\crefname{theorem}{Theorem}{\bf Theorems}
\crefname{observation}{Observation}{\bf Observations}
\crefname{lemma}{Lemma}{\bf Lemmata}
\crefname{corollary}{Corollary}{\bf Corollaries}
\crefname{proposition}{Proposition}{\bf Propositions}
\crefname{definition}{Definition}{\bf Definitions}
\crefname{claim}{Claim}{\bf Claims}
\title{Parameterized Algorithms for Kidney Exchange}
\author{Palash Dey$^1$ and Arnab Maiti$^2$\\\texttt{palash.dey@cse.iitkgp.ac.in$^1$,arnabm2@uw.edu$^2$}\\ Indian Institute of Technology Kharagpur$^1$, University of Washington$^2$}
\begin{document}

\maketitle

\begin{abstract}
	 In kidney exchange programs, multiple recipient-donor pairs each of whom are otherwise incompatible, exchange their donors to receive compatible kidneys. The \KE problem is typically modelled as a directed graph where every vertex is either an altruistic donor or a pair of recipient and donor; directed edges are added from a donor to its compatible recipients. The computational task is to find if there exists a collection of disjoint cycles and paths starting from altruistic donor vertices of length at most $\el_c$ and $\el_p$, respectively, that covers at least some specific number $t$ of non-altruistic vertices (recipients). We study parameterized algorithms for the kidney exchange problem in this paper. Specifically, we design FPT algorithms parameterized by each of the following parameters: (1) the number of recipients who receive kidney, (2) treewidth of the input graph + $\max\{\el_p,\el_c\}$, and (3) the number of vertex types in the input graph when $\el_p\leq \el_c$. We also present interesting algorithmic and hardness results on the kernelization complexity of the problem. Finally, we present an approximation algorithm for an important special case of the \KE problem.
\end{abstract}

\keywords{Fixed-Parameter Tractability,
	Kidney Exchange,
	Vertex Type,
	Kernelization,
	Altruistic donors,
	Compatibility graph,
	Trading-cycle,
	Trading-chain}

\section{Introduction}

Patients having acute renal failures are typically treated either with dialysis or with kidney transplantation. \textcolor{black}{However, the quality of life on dialysis is comparitively lower and also the average life span of the recipients on dialysis is around 10 years~\cite{usrds}.} For this reason, most recipients prefer a kidney transplantation over periodic dialysis. However, the gap between the demand and supply of kidneys, which can be obtained either from a deceased person or from a living donor, is so large that the average waiting time varies from $2$ to $5$ years at most centers~\cite{unos,odts}. Moreover, even if a recipient is able to find a donor, there could be many medical reasons (like blood group or tissue mismatch) due to which the donor could not donate his/her kidney to the recipient.

The {\em Kidney Paired Donation (KPD)}, a.k.a {\em Kidney Exchange} program, allows donors to donate their kidneys to compatible other recipients with the understanding that their recipients will also receive medically compatible kidneys thereby forming some kind of barter market~\cite{ke04,ke07,AbrahamBS07}. Since its inception in~\cite{rapaport1986case}, an increasing amount of people register in the kidney exchange program since, this way, recipients not only have a better opportunity to receive compatible kidneys, but also can get medically better matched kidneys which last longer~\cite{segev2005kidney}. The central problem in any kidney exchange program, also known as the {\em clearing problem}, is how to transplant kidneys among various recipients and donors so that a maximum number of recipients receive kidneys.

Typically, kidney exchanges happen in a cyclical manner. The problem is represented as a directed graph: a recipient along with his/her donor is a vertex; we add directed edges from a vertex $u$ to all other vertices whose recipients are compatible with the donor of the vertex $u$. In a (directed) cycle, the recipient of every vertex receives a kidney from the donor of the previous vertex along the cycle. Donors do not have any legal obligation to donate kidneys and thus, technically speaking, a donor can leave the program as soon as his/her corresponding recipient receives a kidney~\cite{ke05,segev2005kidney}. This is not only unfair but also leaves a recipient without any donor. To avoid this problem, all kidney transplantations along a cycle are done simultaneously. As each transplantation involves two surgeries, logistic and human resource constraints allow only a few surgeries to be carried out simultaneously. For this reason, most kidney exchange platforms allow transplantation along cycles of very small length only~\cite{ke05,AbrahamBS07,manlove2015paired,mak2017kidney}.

Sometimes we have {\em altruistic donors} (a.k.a {\em non-directed donors (NDDs)}) who do not have recipients paired with them. This allows us to have kidney transplantations along with chains (directed paths) also starting from some altruistic vertices --- an altruistic donor donates his/her kidney to some compatible recipient whose paired donor donates his/her kidney to the next recipient along the chain and so on. Some platform allows non-simultaneous surgeries along a chain since a broken chain is less harmful than a broken cycle --- it does not leave any recipient without donor~\cite{anderson}. However, broken chains also lead to unfairness and thus platforms usually only allow small chains for transplantation. Thus, in the presence of altruistic donors, the fundamental problem of kidney exchanges becomes the following: find a collection of disjoint cycles and chains starting from altruistic vertices which covers a maximum number of non-altruistic vertices. We refer to \Cref{def:prob} in \Cref{sec:prelim} for formal definition of the \KE problem.

\subsection{Related Work}

To the best of our knowledge, Rapaport was the first person to introduce the idea of a kidney exchange~\cite{rapaport1986case}. More than sixty thousands recipients await kidney transplantation and are listed on the United Network for Organ Sharing (UNOS) recipient registry in 2005~\cite{segev2005kidney}. In September, 2004, the Renal Transplant
Oversight Committee of New England approved the establishment of a clearinghouse for
kidney exchanges~\cite{ke05}. Roth et al.~\cite{roth05} discovered a wide class of
constrained-efficient mechanisms that are strategy-proof when recipient–donor pairs and surgeons have $0–1$ preferences.. For example, a line of research allow only cycles~\cite{constantino2013new,klimentova2014new,sonmez2014altruistically} while others allow kidney exchanges to happen along both cycles and chains~\cite{manlove2015paired,glorie2014kidney,xiao2018exact}. All versions of the \KE problem can be formulated as some suitable version of the graph packing problem. Jai et al.~\cite{jia2017efficient} discussed interesting relationship between barter market and set packing.

Krivelevich et al.~\cite{krivelevich2007approximation}, Abraham et al.~\cite{AbrahamBS07}, and Belmonte et al.~\cite{DBLP:journals/algorithmica/BelmonteHKKKKLO22} showed that the basic \KE problem along with its various incarnations are \NPH even when the maximum cycle length allowed is three or the maximum path length allowed is four. Hence, the \KE problem is \PNPH parameterized by the maximum cycle length plus the maximum path length. Krivelevich et al.~\cite{krivelevich2007approximation} and Jia et al.~\cite{jia2017efficient} developed approximation algorithms for the \KE problem by exploiting interesting connection with the set packing problem.

Practical heuristics and integer linear programming based algorithms haven been extensively explored for the \KE problem. For example, Manlove and O'Malley proposed an integer programming-based formulation in order to model the criteria that constitute the definition of optimality as per the laws of United Kingdom~\cite{manlove2015paired}. This formulation was used to build a software which was used by the National Health Service Blood and Transplant to find optimal sets of kidney exchanges for their National Living Donor Kidney Sharing Schemes. Dickerson et al.~\cite{dickerson2016position} argued that although kidney donation along an arbitrarily long chain (path) is desirable in theory, there are many practical concerns because of which one would like to use only bounded length chains for kidney donation. Glorie, Klundert, and Wagelmans~\cite{glorie2014kidney} presented a generic iterative branch-and-price algorithm that can deal effectively with multiple criteria involving cycles and chains. Li et al.~\cite{li2014egalitarian} proposed a practically efficient polynomial time algorithms for finding the Lorenz-dominant fractional matching. The Lorenz-dominant fractional matching, which can be implemented as a lottery of integral matchings, is in some sense the fairest allocation and also enjoys the property of being incentive compatible. Biro et al.~\cite{biro2009maximum} designed approximation algorithms when cycles of length at most three are allowed and no chains are allowed. Riascos-Alvarez et al.~\cite{klimentova2014new} developed a decomposition method that is able to consider long cycles and long chains for projected large realistic instances. In particular, their algorithm also allows the prioritization of the solution composition, for example, chains over cycles or vice versa.

Dickerson et al.~\cite{dickerson2016position} introduced the notion of ``vertex type" and showed its usefulness as a graph parameter in real-world instances of a  kidney exchange program. Two vertices are said to have the same vertex type if their neighbourhoods are the same.

The closest predecessor of our work is by Xiao and Wang~\cite{xiao2018exact}. They proposed an exact algorithm with running time $\OO(2^n n^3)$ where $n$ is the number of vertices in the underlying graph. They also presented a fixed parameter tractable algorithm for the \KE problem parameterized by the number of vertex types if we do not have any restriction on the length of cycles and chains. Lin et al.~\cite{lin2019randomized} studied the version of the \KE problem which allows only cycles and developed a randomized parameterized algorithm with respect to the parameter being the number of patients receiving a kidney and the maximum allowed length of any cycle, combined.

\subsection{Contribution}

Designing exact algorithms for the \KE problem has been a research focus in algorithmic game theory. We contribute to this line of research in this paper. Our specific contribution are the following.

	We design \FPT algorithms for the \KE problem parameterized by the number of recipients receiving kidneys~[\Cref{thm:fpt_t}], treewidth of the underlying graph + maximum length of path($\el_p$) + maximum length of cycle allowed ($\el_c$)~[\Cref{thm:fpt_tw-1}], and the number of vertex types when $\el_p\leq \el_c$~[\Cref{thm:fpt_theta}]. We also present a Monadic second-order ($MSO_2$) formula for the \KE problem where the length of the formula is upper bounded by a function of $\el=\max\{\el_c,\el_p\}$~[\Cref{thm:fpt_tw}].

	We show that the \KE problem admits a polynomial kernel with respect to the number of recipients receiving kidneys + maximum degree when $\max\{\el_p,\el_c\}$ is a constant ~[\Cref{thm:kernel}]. We complement this result by showing that the \KE problem does not admit any polynomial kernel parameterized by the number of recipients receiving kidneys+maximum degree+$\max\{\el_p,\el_c\}$ unless $\NP \subseteq \coNP\text{/poly}$~[\Cref{thm:no-kernel}]

Finally, we also design a $(\frac{16}{9}+\epsilon)$-approximation algorithm for the \KE problem if only cycles of length at most $3$ are allowed (and no paths are allowed)~[\Cref{approx}]. We believe that our work substantially improves the current theoretical understanding of the \KE problem.

A short version of this work has been published in IJCAI 2022~\cite{DBLP:conf/ijcai/MaitiD22}.
\section{Preliminaries}\label{sec:prelim}

For an integer $k$, we denote the sets $\{0,1,\ldots,k\}$ and $\{1,2,\ldots,k\}$ by $[k]_0$ and $[k]$, respectively.

A \KE problem is formally represented by a directed graph $\GG=(\VV,\AA)$ which is known as the {\em compatibility graph}. A subset $\BB\subseteq\VV$ of vertices denotes {\em altruistic donors} (also called {\em non-directed donors}); the other set $\VV\setminus\BB$ of vertices denote a recipient-donor pair who wish to participate in the kidney exchange program. We have a directed edge $(u,v)\in\AA$ if the donor of the vertex $u\in\VV$ has a kidney compatible with the recipient of the vertex $v\in\VV\setminus\BB$. Kidney exchanges happen either (i) along a {\em trading-cycle} $u_1, u_2, \ldots, u_k$ where the recipient of the vertex $u_i\in\VV\setminus\BB$ receives a kidney from the donor of the vertex $u_{i-1}\in\VV\setminus\BB$ for every $2\leq i\leq k$ and the recipient of the vertex $u_1$ receives the kidney from the donor of the vertex $u_k$, or (ii) along a {\em trading-chain} $u_1, u_2, \ldots, u_k$ where $u_1\in\BB, u_i\in\VV\setminus\BB$ for $2\le i\le k$ and the recipient of the vertex $u_j$ receives a kidney from the donor of the vertex $u_{j-1}$ for $2\le j\le k$. Due to operational reasons, all the kidney transplants along a trading-cycle or a trading-chain should be performed simultaneously. This puts an upper bound on the length \el of feasible trading-cycles and trading-chains. We define the length of a path or cycle as the number of edges in it. The kidney exchange clearing problem is to find a collection of feasible trading-cycles and trading-chains which maximizes the number of recipients who receive a kidney. Formally, it is defined as follows.

\begin{definition}[\KE]\label{def:prob}
	Given a directed graph $\GG=(\VV,\AA)$ with no self-loops, an altruistic vertex set $\BB\subseteq\VV$, two integers $\el_p$ and $\el_c$ denoting the maximum length of, respectively, paths and cycles allowed, and a target $t$, compute if there exists a collection \CC of disjoint cycles of length at most $\el_c$ and paths with starting from altruistic vertices only each of length at most $\el_p$ which cover at least $t$ non-altruistic vertices. We denote an arbitrary instance of \KE by $(\GG,\BB,\el_p,\el_c,t)$.
\end{definition}

\subsection{Graph Theoretic Terminologies} In a graph \GG, $\VV[\GG]$ denotes the set of vertices in \GG and $\EE[\GG]$ denotes the set of edges in \GG. Let $\GG[\VV']$ denote the induced subgraph on $\VV'$ where $\VV'\subseteq\VV[\GG]$. Two vertices $u$ and $v$ in a directed graph \GG are called vertices of the same type if they have the same set of in-neighbors and the same set of out-neighbors. \textcolor{black}{If there are no self loops in \GG, vertices of the same type form an independent set.} Treewidth measures how treelike an undirected graph is. We refer to \cite{CyganFKLMPPS15} for an elaborate description of treewidth, tree decomposition, and nice tree decomposition.

\begin{table*}
	\begin{center}
	\begin{tabular}{|cc|}\hline
		\BB & set of altruistic vertices \\
		$t$ & target number of recipients to receive kidneys\\
		$\el_p$ & length of the longest path allowed \\
		$\el_c$ & length of the longest cycle allowed\\
		$\tw$ & treewidth of underlying undirected graph \\
		$\theta$ & number of vertex types\\
		$\Delta$ & maximum degree of underlying undirected graph \\
		$\el$ & $\max\{\el_p,\el_c\}$\\
		\hline
	\end{tabular}
\caption{Notation table.}
\label{table1}
\end{center}
\end{table*}

\begin{definition}[Treewidth]
	Let $G = (V_G,E_G)$ be an undirected graph.  A {\em tree-decomposition} of a graph $G$ is a pair
	$(\mathbb{T} = (V_{\mathbb{T}},E_{\mathbb{T}}),\mathcal{ X}=\{X_{t}\}_{t\in V_{\mathbb T}})$,  where
	${\mathbb T}$ is a tree where every node $t\in V_{\mathbb T}$
	is assigned a subset $X_t\subseteq V_G$, called a bag,  such that the following conditions hold.
	\begin{itemize}
		\item $\bigcup_{t\in V_\mathbb{T}}{X_t}=V_G$,
		\item for every edge $\{x,y\}\in E_G$ there is a $t\in V_\mathbb{T}$ such that  $x,y\in X_{t}$, and
		\item for any $v\in V_G$ the subgraph of $\mathbb{T}$ induced by the set  $\{t\mid v\in X_{t}\}$ is connected.
	\end{itemize}

	The {\em width} of a tree decomposition is $\max_{t\in V_\mathbb{T}} |X_t| -1$. The {\em treewidth} of $G$
	is the  minimum width over all tree decompositions of $G$ and is denoted by $\tw(G)$.

	A tree decomposition  $(\mathbb{T},\mathcal{ X})$ is called a {\em nice edge tree decomposition} if $\mathbb{T}$ is a tree rooted at some node $r$ where $ X_{r}=\emptyset$, each node of $\mathbb{T}$ has at most two children, and each node is of one of the following kinds:
	\begin{itemize}
		\item {\bf Introduce node}: a node $t$ that has only one child $t'$ where $X_{t}\supset X_{t'}$ and  $|X_{t}|=|X_{t'}|+1$.
		\item {\bf Introduce edge node} a node $t$ labeled with an edge between
		$u$ and $v$, with only one child $t'$ such that $\{u,v\}\subseteq X_{t'}=X_{t}$.
		This bag is said to introduce $uv$.
		\item {\bf  Forget vertex node}: a node $t$ that has only one child $t'$  where $X_{t}\subset X_{t'}$ and  $|X_{t}|=|X_{t'}|-1$.
		\item {\bf Join node}:  a node  $t$ with two children $t_{1}$ and $t_{2}$ such that $X_{t}=X_{t_{1}}=X_{t_{2}}$.
		\item {\bf Leaf node}: a node $t$ that is a leaf of $\mathbb T$, and $X_{t}=\emptyset$.
	\end{itemize}
	We additionally require that every edge is introduced exactly once.
	One can  show that  a tree decomposition of width $t$ can be transformed into
	a nice tree decomposition of the same width $t$ and  with
	$\mathcal{O}(t |V_G|)$ nodes, see~e.g.~\cite{BODLAENDER201842,CyganFKLMPPS15}. For a node $t \in \mathbb{T}$, let $\mathbb{T}_t$ be the subtree of $\mathbb{T}$ rooted at $t$, and $V(\mathbb{T}_t)$ denote the vertex set in that subtree. Then $\beta(t)$ is the subgraph of $G$ where the vertex set is  $\bigcup_{t' \in V(\mathbb{T}_t)} X_{t'}$ and the edge set is the union of the set of edges introduced in each $t', t' \in V(\mathbb{T}_t)$. We denote by $V(\beta(t))$ the set of vertices in that subgraph, and by $E(\beta(t))$ the set of edges of the subgraph.

\end{definition}

Since our graph is directed, whenever we mention treewidth of our graph, we refer to the treewidth of the underlying undirected graph; two vertices $u$ and $v$ are neighbors of the underlying undirected graph if and only if either there is an edge from $u$ to $v$ or from $v$ to $u$. Also, refer to the \Cref{table1} for the important notations.

\subsection{Parameterized Complexity}\label{subsec:pc}

A tuple $(x, k)$, where k is the parameter, is an instance of a parameterized problem. \emph{Fixed parameter tractability} (FPT) refers to solvability in time  $f(k) \cdot p(|x|)$ for a given instance $(x, k)$, where  $p$ is a polynomial in the input size $|x|$ and $f$ is an arbitrary computable function of $k$ . We use the notation $\OO^*(f(k))$ to denote $O(f(k)poly(|x|))$.

%
%
We say a parameterized problem is \PNPH if it is \NPH even for some constant values of the parameter.

\begin{definition}[Kernelization]
	A kernelization algorithm for a parameterized problem   $\Pi\subseteq \Gamma^{*}\times \mathbb{N}$ is an
	algorithm that, given $(x,k)\in \Gamma^{*}\times \mathbb{N} $, outputs, in time polynomial in $|x|+k$, a pair
	$(x',k')\in \Gamma^{*}\times
	\mathbb{N}$ such that (a) $(x,k)\in \Pi$ if and only if
	$(x',k')\in \Pi$ and (b) $|x'|,k'\leq g(k)$, where $g$ is some
	computable function.  The output instance $x'$ is called the
	kernel, and the function $g$ is referred to as the size of the
	kernel. If $g(k)=k^{O(1)}$, then we say that
	$\Pi$ admits a polynomial kernel.
\end{definition}

It is well documented that the presence of a polynomial kernel for certain parameterized problems implies that the polynomial hierarchy collapses to the third level (or, more accurately, \caveat{}). As a result, polynomial-sized kernels are unlikely to be present in these problems. We use cross-composition to demonstrate kernel lower bounds.

\begin{definition}[Cross Composition]
	Let $L\subseteq \Sigma^*$ be a language and $Q\subseteq \Sigma^* \times \NB$ be a parameterized language. We say that $L$ cross-composes into $Q$ if there exists a polynomial equivalence relation $R$ and an algorithm $\AA$, called the cross-composition, satisfying the following conditions. The algorithm $\AA$ takes as input a sequence of strings $x_1, x_2 , \ldots , x_t \in \Sigma^*$ that are equivalent with respect to $R$, runs in time polynomial in $\sum_{i=1}^t |x_i|$, and outputs one instance $(y, k)\in \Sigma^*\times\NB$ such that:
	\begin{enumerate}
		\item $k\le p(\max_{i=1}^t |x_i| + \log t)$ for some polynomial $p(\cdot)$, and
		\item $(y,k)\in Q$ if and only if there exists at least one index $i$ such that $x_i\in L$.
	\end{enumerate}
\end{definition}

\section{Results}

We present our results in this section.

\subsection{\FPT Algorithms}

We begin with presenting an \FPT algorithm for the \KE problem parameterized by the number of recipients who receive a kidney. Note that, although the number $t$ of recipients who receive a kidney may not be small, this number is always smaller than the total number $n$ of donors. Thus an \FPT algorithm parameterized by $t$ will always be efficient on some instances of the problem compared to the $\OO(2^n n^3)$ time algorithm of Xiao and Wang~\cite{xiao2018exact}. We use the technique of color coding~\cite{alon1994color,alon1995color} to design our algorithm.

\begin{theorem}\label{thm:fpt_t}
	There is an algorithm for the \KE problem which runs in time $\OO^*(2^{\OO(t)})$.
\end{theorem}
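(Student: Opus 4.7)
The plan is color coding on top of a structural size bound. The key observation is that any solution—a collection of disjoint chains and cycles—covering exactly $t$ patients uses at most $2t$ vertices in total: each chain contains exactly one altruistic vertex and at least one patient, so the number of chains is at most $t$, and adding the at most $t$ altruistic starting vertices to the $t$ covered patients yields the bound. Given any feasible solution covering more than $t$ patients, I would trim it by truncating chains from the tail and dropping superfluous cycles until the patient count lies in $[t, 2t-1]$. The only obstruction to this trimming is a cycle of length strictly greater than $t$ appearing in the solution; but such a cycle is itself a \yes-certificate, and one can detect whether $\GG[\VV \setminus \BB]$ admits a directed cycle of length in $[t+1, \el_c]$ in $\OO^*(2^{O(t)})$ time via a Long Directed Cycle subroutine, invoked as a black box.

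Next, I would color $\VV$ uniformly at random with $N := 2t$ colors. A fixed $\le 2t$-vertex solution becomes colorful (all vertex colors pairwise distinct) with probability at least $N!/N^N \ge e^{-N} = 2^{-O(t)}$. Repeating $2^{O(t)}$ times, or iterating over an $(|\VV|, 2t)$-perfect hash family of size $2^{O(t)} \log |\VV|$, suffices to derandomize.

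For a fixed coloring I would precompute two tables by standard subset DP, each in $\OO^*(2^{N})$ time: $\text{Chain}(v, S)$ records whether some chain starting at an altruistic vertex ends at the non-altruistic vertex $v$, has length at most $\el_p$, and uses vertex colors exactly $S$; $\text{Cycle}(S)$ records whether $\GG[\VV \setminus \BB]$ contains a cycle of length at most $\el_c$ whose vertex colors are exactly $S$. From these I would read off $\text{best}(S)$, the largest patient count in a single colorful chain or cycle with color set exactly $S$, and run the outer DP
\begin{equation*}
D(S) \;=\; \max_{\emptyset \neq S' \subseteq S}\bigl(D(S \setminus S') + \text{best}(S')\bigr),\qquad D(\emptyset) = 0,
\end{equation*}
answering \yes iff $D(S) \ge t$ for some $S$. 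Enumerating $(S, S')$ pairs gives $\sum_{S} 2^{|S|} = 3^{N} = 2^{O(t)}$ transitions, each evaluated in $O(1)$, so the overall running time is $\OO^*(2^{O(t)})$.

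The trickiest step, which I expect to require the most care, is the structural trimming argument. Chains can be truncated from the tail, but cycles cannot be shortened, so a feasible \yes-instance whose only solution consists of a single cycle of length exceeding $t$ must be caught by the separate long-cycle routine rather than by the color-coding DP. Cleanly absorbing this corner case into the main algorithm while keeping the dependence on $t$ in the exponent is the main obstacle.
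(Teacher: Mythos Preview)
Your proposal is correct and follows essentially the same route as the paper: both handle the long-cycle corner case via a black-box subroutine, then apply color coding with $O(t)$ colors and a subset DP over color classes to assemble a colorful collection of short chains and cycles. The paper additionally preprocesses long paths (unnecessary, as you observe, since chains truncate) and uses $3t$ colors; your $2t$-color claim needs slightly more careful trimming than ``patient count in $[t,2t-1]$'' alone guarantees (that only yields $\le 3t$ vertices once altruistic starts are counted), but this is cosmetic and does not affect the $\OO^*(2^{O(t)})$ running time.
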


\begin{proof}
	Let $(\GG,\BB,\el_p, \el_c,t)$ be an arbitrary instance of the \KE problem. If $\el_p\geq t$, we check whether there exists a path starting from an altruistic vertex of length $t$; if $\el_c\ge t$, then we check whether there exists a cycle of length $\el_1$ for some $t\leq\el_1\leq \el_c$. Note that this can be checked by a deterministic algorithm in time $\OO^*\left(2^{\OO(t)}\right)$ (for path refer to~\cite{CyganFKLMPPS15} and for cycle refer to~\cite{zehavi2016randomized}). If there exists such a cycle or path, then clearly $(\GG,\BB,\el_p, \el_c,t)$ is a \YES instance. So for the rest of the proof, let us update $\ell_p$ and $\el_c$ as  $\min\{\el_p,t-1\}$ and $\min\{\el_c,t-1\}$, respectively. 

	We observe that if $(\GG,\BB,\el_p,\el_c,t)$ is a \YES instance, then there is a collection $\CC=(\DD_1,\ldots,\DD_k)$ of $k\leq t$ disjoint cycles and paths starting from altruistic vertices each of length at most $\el_c$ and $\el_p$, respectively, which covers $t_1$ non-altruistic vertices where $t\leq t_1\leq 2t$. Note that such a collection will exist as $\el_c<t$ and $\el_p<t$.  We observe that the total number of vertices involved in $\DD_1,\ldots,\DD_k$ is at most $3t$ (at most one altruistic vertices in each $\DD_i, i\in[k]$). We color each vertex of \GG uniformly at random from a set $\SS$ of $3t$ colors. We say a coloring of \GG is good if every vertex in the $\CC$ gets a different color. A random coloring is good with probability at least
	\[\frac{(3t)!}{(3t)^{3t}}\geq \frac{1}{e^{3t}}. \]

	Let $C$ be a non-empty set of colors. Let $\GG_C$ denote an induced subgraph of \GG on the set of vertices colored with one of the colors in $C$. We now solve the \KE problem using dynamic programming. We maintain a dynamic programming table \DD which is indexed by a set of colors $C$ and a number $t'$. $\DD(C,t')$ denotes whether there is a collection of disjoint cycles and paths starting from altruistic vertices each of length at most $\el_c$ and $\el_p$, respectively, which covers $t'$ non-altruistic vertices in the graph $\GG_C$ and no two vertices in this collection have the same color. Now we introduce a function $f$ which has a set of colors $C'$ and a number $i$ as its arguments. $f(C',i)$ decides whether there is a valid colourful cycle or path starting from altruistic vertex of length $i$ in $\GG_{C'}$. By valid colourful cycle (resp., path) we mean that no two vertices in the cycle (resp., path) have the same color and the length of the cycle (resp., path)  is at most $\el_c$ (resp., $\el_p$). $f(C',i)$ can be computed in time $\OO^*(2^{\OO(i)})$ time (refer to~\cite{CyganFKLMPPS15}). Now we present a recursive formula to compute each entry of the table.
\[\DD(C,t')=\bigvee\limits_{C',i: C'\subseteq C,C'\neq \emptyset, 1\leq i\leq t'}\left(\DD(C\setminus C',t'-i)\wedge f(C',i)\right)\]

For base cases, let $\DD(C,0)=1$ and $\DD(\emptyset,t')=0$ if $t'>0$. Now argue the correctness of the above recursive equation. In one direction, let $\DD(C,t')=1$. It implies that there is a collection of valid colourful disjoint cycles and paths starting from altruistic vertices which covers $t'$ non-altruistic vertices in the graph $\GG_C$. Now consider once such disjoint cycle or path. Let the number of non-atruistic vertices in it be $i$ and the set of colors of the vertices in it be $C'$. Then, clearly $f(C',i)=1$ and $\DD(C\setminus C',t-i)=1$. In the other direction, let there exist a set $C'\subseteq C$ and a number $i\in [t']$ such that $\DD(C\setminus C',t'-i)\wedge f(C',i)=1$. This implies that there is a collection of valid colourful disjoint cycles and paths starting from altruistic vertices which covers $t'-i$ non-altruistic vertices in the graph $\GG_{C\setminus C'}$ and there is a valid colourful cycle or path starting from altruistic vertex of length $i$ in $\GG_{C'}$. Hence there is is a collection of valid colourful disjoint cycles and paths starting from altruistic vertices which covers $t'$ non-altruistic vertices in the graph $\GG_{C}$. Therefore $\DD(C,t')=1$.

If $(\GG,\BB,\el_p,\el_c,t)$ is a \YES instance and coloring is good, then $\bigvee\limits_{t\leq t'\leq 2t}\DD(\SS,t')=1$. For a \NO instance, $\bigvee\limits_{t\leq t'\leq 2t}\DD(\SS,t')=0$ for any coloring.

	The total number of entries in the table \DD is $2^{O(t)}\cdot t$. Each entry can be computed in time $\OO^*\left(2^{\OO(\el)}\right)$. Hence, with probability at least $e^{-3t}$, our algorithm outputs the correct decision in $\OO^*\left(2^{\OO(t)}\right)$ time. By repeating $\OO(e^{3t})$ times, we find the correct decision with constant probability. The overall running time of our algorithm is $\OO^*\left(2^{\OO(t)}\right)$. The algorithm can be derandomized by using an $(n,3t,3t)$-splitter.
\end{proof}

We next formulate a Monadic second-order ($MSO_2$) formula for the \KE problem where the length of the formula is upper bounded by $\el=\max\{\el_c,\el_p\}$.

\begin{theorem}\label{thm:fpt_tw}
There is a Monadic second-order ($MSO_2$) formula for \KE problem where the length of the formula is upper bounded by a function of $\el=\max\{\el_c,\el_p\}$.
\end{theorem}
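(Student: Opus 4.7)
The plan is to express \KE in the \emph{LinEMSO} (optimization) layer on top of $MSO_2$: a single formula $\varphi(F,W)$ with two free set variables whose length depends only on $\el$, together with the linear objective $|W|$. The answer to $(\GG,\BB,\el_p,\el_c,t)$ is then read off by checking whether the maximum of $|W|$ over models of $\varphi$ is at least $t$, so the threshold $t$ (which would otherwise blow up the formula length) is handled by the optimization layer rather than by the formula itself.

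First I would take as free variables an edge set $F\subseteq\EE[\GG]$ (encoding the arcs of the chosen collection of disjoint cycles and altruistic-rooted paths) and a vertex set $W\subseteq\VV[\GG]\setminus\BB$ (encoding the covered non-altruistic vertices). A constant-size sub-formula imposes that every vertex has in-degree and out-degree at most $1$ in the subdigraph $(\VV[\GG],F)$, which already forces each weakly connected $F$-component to be a simple directed cycle or a simple directed path. The central gadget is a bounded-distance predicate $\mathrm{reach}^{F}_{\le k}(u,v)$ saying ``there is an underlying walk of at most $k$ $F$-edges between $u$ and $v$'', written as a disjunction over $i\in\{0,\ldots,k\}$ with $i$ existentially quantified intermediate vertices and $i$ incidence checks against $F$; its length is $O(k^2)$ and thus bounded by a function of $\el$.

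Using $\mathrm{reach}^{F}_{\le\el_c}$ and $\mathrm{reach}^{F}_{\le\el_p}$, I would require that for every vertex $v$ incident to $F$ one of the following holds: (i) every $w$ with $\mathrm{reach}^{F}_{\le\el_c}(v,w)$ has in- and out-degree exactly $1$ in $F$ (so $v$ lies on a cycle of length $\le\el_c$); or (ii) there exist $u\in\BB$ and $w$ with $\mathrm{reach}^{F}_{\le\el_p}(u,v)\wedge\mathrm{reach}^{F}_{\le\el_p}(u,w)$ such that $u$ has in-degree $0$ and out-degree $1$, $w$ has in-degree $1$ and out-degree $0$, and every other component vertex has both degrees equal to $1$ (so $v$ lies on a path of length $\le\el_p$ starting at an altruistic vertex). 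Finally, $W$ is pinned by the equivalence $v\in W\iff v\in\VV\setminus\BB\wedge \exists e\in F$ incident to $v$. The conjunction $\varphi(F,W)$ of these sub-formulae has length depending only on $\el$, and applying the LinEMSO extension of Courcelle's theorem with objective $|W|$ yields a \YES-answer iff $\max_{\varphi}|W|\ge t$.

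The main obstacle is preventing $t$ and $|\VV[\GG]|$ from leaking into the length of $\varphi$; this is precisely why the structural $MSO_2$ formula must be decoupled from the numeric threshold, with the latter absorbed by the optimization layer. The bounded-reach predicate is what makes the structural part chargeable entirely to $\el$: because the degree constraint already forces every component to be a simple cycle or simple path, two vertices lie in the same weak $F$-component iff they lie within $F$-distance $\el$, so the global notion of connected component never has to be encoded with an unbounded-depth formula.
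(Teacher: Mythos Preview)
Your overall strategy coincides with the paper's: take a free edge-set variable, force in-/out-degree at most $1$ so that the chosen subgraph is a disjoint union of simple directed paths and cycles, bound the lengths by sub-formulas whose size depends only on $\el$, and hand the threshold $t$ to the optimization layer of the Arnborg--Lagergren--Seese / Courcelle framework. The paper checks the length constraint edge-by-edge (for every $e\in E$ it exhibits an explicit short cycle containing $e$ or a short tail of a path after $e$) and maximizes $|E|$, which already equals the number of covered non-altruistic vertices; you check vertex-by-vertex and add a second free set $W$. Those are cosmetic differences.

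There is, however, a genuine gap in your clause (i). Take $\el_c=3$ and let $F$ be a single directed cycle of length $1000$. Every vertex $v$ on this cycle satisfies (i): any $w$ with $\mathrm{reach}^{F}_{\le 3}(v,w)$ is one of the seven cycle vertices within undirected $F$-distance $3$ of $v$, and each of those has in- and out-degree exactly $1$ in $F$. Hence your $\varphi$ accepts this $F$, contradicting the parenthetical claim ``so $v$ lies on a cycle of length $\le\el_c$''. Your closing justification --- that two vertices share a weak $F$-component iff they lie within $F$-distance $\el$ --- is circular: the biconditional holds only once one already knows every component has length at most $\el$, which is exactly what clause (i) was supposed to establish. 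The repair is easy and is essentially what the paper's $cycle_i$ predicate does: replace (i) by the existence of an explicit short closed walk through $v$, namely a disjunction over $2\le k\le\el_c$ of ``there exist pairwise distinct $v_1,\dots,v_k$ with $v_1=v$, $(v_i,v_{i+1})\in F$ for all $i<k$, and $(v_k,v_1)\in F$''.
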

\begin{proof}
Let $(\GG=(\VV,\EE),\BB,\el_p,\el_c,t)$ be an arbitrary instance of the \KE problem. We now define the atomic formulae $h(e,u)$ and $t(e,u)$ which will be used in the $MSO_2$ formula. If $u$ is a vertex variable and $e$ is an edge variable, then the formula $h(e,u)$ is true if and only if $u$ is the head of the edge $e$. Similarly, if $u$ is a vertex variable and $e$ is an edge variable, then the formula $t(e,u)$ is true if and only if $u$ is the tail of the edge $e$. Note that in an edge $(u,v)$, $u$ is the tail and $v$ is the head.

Let $u$ be a vertex variable, $e$ be an edge variable and $E$ be an edge set variable. We now define some formulae which will be useful to construct the $MSO_2$ formula. First, we define $out(u,E)$ which is true if and only if the outdegree of $u$ is at most $1$ in the subgraph $\GG'=(\VV,E)$.
\begin{align*}
out(u,E)=&\forall_{e_1\in E}[t(e_1,u)\rightarrow \lnot(\exists_{e_2\in E} ((e_1\neq e_2)\wedge t(e_2,u) ))]
\end{align*}

Next, we define $in(u,E)$ which is true iff the indegree of $u$ is at most 1 in the subgraph $\GG'=(\VV,E)$.
\begin{align*}
in(u,E)=&\forall_{e_1\in E}[h(e_1,u)\rightarrow \lnot(\exists_{e_2\in E}((e_1\neq e_2)\wedge h(e_2,u) ))]
\end{align*}

Next, we define $alt(u,E)$ which is true iff in the subgraph $\GG'=(\VV,E)$, $u$ is an altruistic vertex or $u$ has outdegree equal to 0 or $u$ has indegree at least 1.
\begin{align*}
alt(u,E)=&(\lnot\exists_{e\in E}\text{ } t(e,u))\vee( \exists_{e\in E}\text{ } h(e,u))\vee(\exists_{v\in \BB}\text{ } u=v)
\end{align*}

Next, for $i\geq 2$, we define $dis_i(x_1,\ldots,x_i)$ which is true iff $x_1,\ldots,x_i$ are pairwise distinct.
\begin{align*}
dis_i(x_1,\ldots,x_i)=\bigwedge_{j\in[i]}\bigwedge_{k\in[i]\setminus\{j\}}(x_j\neq x_k)
\end{align*}

Let the subgraph $\GG'=(\VV,E)$ be a collection of isolated vertices, disjoint paths and disjoint cycles. Then, for $i\in\{1,2,\ldots,l_p-1\}$, we define $path_i(e,E)$ which is true iff there is a disjoint path $p$ in the subgraph $\GG'=(\VV,E)$ which contains the edge $e$ and there are $i$ edges between the head of $e$ and the ending vertex of $p$.
\begin{align*}
path_i(e,E)=& \exists_{e_1,e_2,\ldots,e_i\in E}\exists_{x_1,x_2,\ldots,x_{i+1},x^\pr\in \VV}[dis_{i+2}(x_1,\ldots,x_{i+1},x^\pr)\\
&\wedge (t(e,x^\pr)\wedge h(e,x_1)) \wedge (t(e_1,x_1)\wedge h(e_1,x_2))\\
&\wedge \ldots  \wedge (t(e_i,x_i)\wedge h(e_i,x_{i+1})) \wedge  \lnot \exists _{e'\in E} t(e',x_{i+1}) ]
\end{align*}

Similarly, we define $path_0(e,E)$ as follows.
\begin{align*}
path_0(e,E)=\exists_{x_1\in \VV}(h(e,x_{1})\wedge \lnot \exists _{e'\in E} t(e',x_{1}))
\end{align*}

Let the subgraph $\GG'=(\VV,E)$ be a collection of isolated vertices, disjoint paths and disjoint cycles. Then, for $i\in\{1,2,\ldots,l_c-1\}$, we define $cycle_i(e,E)$ which is true iff there is a cycle $c$ in the subgraph $\GG'=(\VV,E)$ which contains the edge $e$ and there are $i+1$ edges in the cycle $c$.
\begin{align*}
cycle_i(e,E)=& \exists_{e_1,e_2,\ldots,e_i\in E}\exists_{x_1,x_2,\ldots,x_{i+1}\in \VV}[dis_{i+1}(x_1,\ldots,x_{i+1})\\
&\wedge (t(e,x_{i+1})\wedge h(e,x_1)) \wedge (t(e_1,x_1)\wedge h(e_1,x_2))\\
&\wedge \ldots  \wedge (t(e_i,x_i)\wedge h(e_i,x_{i+1})]
\end{align*}

We also define $path_{-1}(e,E),cycle_{-1}(e,E),cycle_{0}(e,E)$ as false for all the edge variables $e$ and for all the edge set variables $E$.

Next, we define our main $MSO_2$ formula which is $\varphi(\BB,E)$ as follows
\begin{align*}
&\varphi(\BB,E)=\forall_{v\in \VV} (out(v,E)\wedge in(v,E)\wedge alt(v,E))\wedge \\
&\forall_{e\in E}\left(\bigvee_{i\in \{-1,0,\ldots,\el_p-1\}}path_i(e,E)\vee \bigvee_{i\in \{-1,0,\ldots,\el_c-1\}}cycle_i(e,E)\right)
\end{align*}

We now show that $\varphi(\BB,E)$ is true iff the subgraph $\GG'=(\VV,E)$ excluding the isolated vertices is a collection of disjoint paths which start with an altruistic vertex and disjoint cycles of length at most $\el_p$ and $\el_c$, respectively.

In one direction, let  $\varphi(\BB,E)$ be true. Then $\forall_{v\in \VV} (out(v,E)\wedge in(v,E)\wedge alt(v,E))$ is true. This implies that the indegree and outdegree of every vertex is at most 1 and disjoint paths, if any begin with an altruistic vertex which implies that $\GG'=(\VV,E)$ is a collection of isolated vertices, disjoint paths beginning with an altruistic vertex and disjoint cycles. Consider any disjoint path $P$ with an edge $e$ whose tail is the starting vertex of $P$. Since  $\varphi(\BB,E)$ is true, $\exists i\in \{0,\ldots,\el_p-1\}$ such that $path_i(e,E)$ is true. This implies that the length of $P$ is at most $\el_p$. Similarly consider any disjoint cycle $C$ with an edge $e$. Since  $\varphi(\BB,E)$ is true, $\exists i\in \{1,\ldots,\el_c-1\}$ such that $cycle_i(e,E)$ is true. This implies that the length of $C$ is at most $\el_c$.

In the other direction, let the subgraph $\GG'=(\VV,E)$ excluding the isolated vertices is a collection of disjoint paths which start with an altruistic vertex and disjoint cycles of length at most $\el_p$ and $\el_c$. This implies that the indegree and outdegree of every vertex is at most 1 and disjoint paths begin with an altruistic vertex which implies $\forall_{v\in \VV} (out(v,E)\wedge in(v,E)\wedge alt(v,E))$ is true. Consider any edge $e$ in $\GG'=(\VV,E)$. If $e$ is part of a path $P$, then there are $i$ edges between the head of $e$ and the ending vertex of $P$ where $i\in\{0,\ldots,\el_p-1\}$ as the length of $P$ is at most $\el_p$. Similarly,  if $e$ is part of a path $C$, then there are $i+1$ edges in the cycle $c$ where $i\in\{1,\ldots,\el_c-1\}$ as the length of $C$ is at most $\el_c$. Therefore, $\forall_{e\in E}\left(\bigvee_{i\in \{-1,0,\ldots,\el_p-1\}}path_i(e,E)\vee \bigvee_{i\in \{-1,0,\ldots,\el_c-1\}}cycle_i(e,E)\right)$ is true. This implies that $\varphi(\BB,E)$ is true.

Let $\alpha(E)=|E|$ where $E$ is a free monadic edge set variable. Then, there exists an algorithm that in $g(||\varphi||,\tw)\cdot |\VV|$ time finds the maximum value of $\alpha(E)$ for which $\varphi(\BB,E)$ is true, where $g$ is some computable function \cite{arnborg1991easy}. If the maximum value of $\alpha(E)$ is at least $t$, we have a YES instance otherwise we have a NO instance.  Note that $||\varphi||$ is upper bounded by function of $\el$ only. This concludes the proof of this theorem.
\end{proof}

It follows immediately from \Cref{thm:fpt_tw} and \cite{arnborg1991easy} that the \KE problem is FPT parameterized by $\tw$ (treewidth) and $\ell=\max\{\el_p,\el_c\}$. However, the running time that we get is not practically useful. Next we proceed to design an efficient dynamic programming based algorithm with running time $\OO^*(\ell^{O(\tw)} \tw^{ O(\tw )} )$.

\begin{theorem}\label{thm:fpt_tw-1}
There is an algorithm for the \KE problem which runs in
time $\OO^*(\ell^{O(\tw)} \tw^{ O(\tw )} )$. In particular, \KE admits an \FPT algorithm parameterized by $(\max\{\el_p,\el_c\},\tw)$.
\end{theorem}

\begin{proof}
	Let $(\GG,\BB,\el_p,\el_c,t)$ be an arbitrary instance of the \KE problem. Let $\TT = (T, \{X_t\subseteq\VV[\GG] \}_{t'\in V (T)} )$ be a nice tree decomposition of the input
	$n$-vertex graph $\GG$ that has width at most $\tw$. Let \TT be rooted at some node $r$. For a node $t'$ of \TT , let $V_{t'}$ be the union of all the bags present in the subtree of \TT rooted at $t'$, including $X_{t'}$. Let $\GG_{t'}=(V_{t'},E_{t'}=\{e: e$ is introduced in the subtree rooted at $t'\})$  be a subgraph of $\GG[V_{t'}]$. We solve the \KE problem using dynamic programming. At every bag of the tree decomposition, we maintain a dynamic programming table \DD which is indexed by the tuple
	$(u, (P_i)_{i\in[\mu]_0}, (Q_i)_{i\in[\mu]}, (E_i)_{i\in[\mu]}, g:[\mu]\rightarrow\{p, c\}, \LL:[\mu]\rightarrow[\el]_0, a:[\mu]\rightarrow\{0,1\},s:[\mu]\rightarrow X_u\cup\{\perp\}, e:[\mu]\rightarrow X_u\cup\{\perp\})$
	where $u$ is the node in the tree decomposition, $\mu\in[\tw+1]$, $(P_i)_{i\in[\mu]_0}$ is a partition of $X_u$, $Q_i$ is a permutation of $P_i$, and $E_i$ is a set of edges from $\EE[\GG_u]$ having both their end points in $P_i$ for every $i\in[\mu]$. Here $\perp$ is a special symbol which is not part of $\VV[\GG]$. We define $\DD(u, (P_i)_{i\in[\mu]_0}, (Q_i)_{i\in[\mu]},(E_i)_{i\in[\mu]}, g, \LL,a,s,e)$ to be the maximum number of edges present in any subgraph \HH of $\GG_u$ such that all the following conditions hold.

	\begin{itemize}
		\item \HH excluding the isolated vertices is a collection of disjoint paths and disjoint cycles.
		\item Disjoint paths starting with an altruistic vertex have length at most $\el_p$ whereas paths without an altruistic vertex have length at most $\el$.
		\item Disjoint cycles have length at most $\el_c$.
		\item Disjoint paths which do not have a vertex from $X_u$ must begin with an altruistic vertex.
		\item $\VV[\HH]\cap X_u = X_u\setminus P_0$
		\item For each path $P\in \HH$, if $\exists i,j\in[\mu]$ such that $P_i\cap \VV[P]\neq \emptyset$ and $P_j\cap \VV[P]\neq \emptyset$, then $i=j$.
		\item For each cycle $C\in \HH$, if $\exists i,j\in[\mu]$ such that $P_i\cap \VV[C]\neq \emptyset$ and $P_j\cap \VV[C]\neq \emptyset$, then $i=j$.

		\item For every $i\in[\mu]$, if $g(i)=p$, then either $P_i$ consists of only one vertex which is isolated or $\EE[\HH]\cap \EE[\GG_u[P_i]]\subseteq \EE[P]$ where $P$ is a disjoint path in \HH, $Q_i$ is a topological order of $P_i$ w.r.t. $P$, $E_i=\EE[\HH]\cap \EE[\GG_u[P_i]]$,  $P$ has $\LL(i)$ edges and exactly $a(i)$ altruistic vertices. If $s(i)=\perp$, then starting vertex of $P$ is not part of $P_i$. If $s(i)=v$, then starting vertex of $P$ is $v$ where $v\in P_i$. If $e(i)=\perp$, then ending vertex of $P$ is not part of $P_i$. If $e(i)=v$, then ending vertex of $P$ is $v$ where $v\in P_i$.


		\item For every $i\in[\mu]$, if $g(i)=c$ and $Q_i=x_1>\cdots>x_\nu$, then $\EE[\HH]\cap \EE[\GG_u[P_i]]\subseteq \EE[C]$ where $C$ is a disjoint cycle in \HH, $Q_i$ is a topological order of $P_i$ w.r.t. the path created by removing the edge $(x_\nu,x_1)$ from $C$, $E_i=\EE[\HH]\cap \EE[\GG_u[P_i]]$, $C$ has $\LL(i)$ edges.
	\end{itemize}
	Clearly $\DD(r,\emptyset,\emptyset,\emptyset,\emptyset,\emptyset,\emptyset,\emptyset,\emptyset)\geq t$ if and only if $(\GG,\BB,\el_p,\el_c,t)$ is a YES instance.

	We now explain how we update the table entry $\DD(u, (P_i)_{i\in[\mu]_0}, (Q_i)_{i\in[\mu]},(E_i)_{i\in[\mu]}, g, \LL,a,s,e)$. First, we make this table entry $-\infty$ if any of the following conditions are satisfied.
	\begin{itemize}
		\item There exists an $i\in[\mu]$ such that $g(i)=c$ and $a(i)=1$.
		\item There exists an $i\in[\mu]$ such that $g(i)=p$, $a(i)=0$ and $s(i)=\perp$.
		\item There exists an $i\in[\mu]$ such that $g(i)=c$ and $\LL(i)<\max\{2,|P_i|\}$.
		\item There exists an $i\in[\mu]$ such that $g(i)=p$ and $\LL(i)<|P_i|-1$.
		\item There exists an $i\in[\mu]$ such that $P_i$ contains more than $a(i)$ altruistic vertices.
		\item There exists an $i\in[\mu]$ such that $\LL(i)>\el$.
		\item  There exists an $i\in[\mu]$ such that $\LL(i)>\el_c$ and $g(i)=c$.
		\item  There exists an $i\in[\mu]$ such that $\LL(i)>\el_p$ and $a(i)=1$.
		\item There exists an $i\in[\mu]$ such that $Q_i=x_1>\ldots>x_\lambda$, $g(i)=p$, $E_i \nsubseteq \{(x_i,x_{i+1}):i\in\{1,\ldots,\lambda-1\}\}$.
		\item There exists an $i\in[\mu]$ such that $Q_i=x_1>\ldots>x_\lambda$, $g(i)=c$, $E_i \nsubseteq \{(x_i,x_{i+1}):i\in\{1,\ldots,\lambda-1\}\}\cup\{(x_\lambda,x_1)\}$.
	\end{itemize}
	We do this because none of the DP Table indices satisfying the above conditions can lead to a valid solution. We now use the following formulas to compute table entries in a bottom-up fashion.

	\textbf{ Leaf node:} Since the tree-decomposition $(\TT, \{X_u\}_{u\in\VV[\TT]})$ is nice, for every leaf node $u$ of \TT, we have $X_u=\emptyset$ and thus all the DP table entries at $u$ is set to $0$.

	\textbf{ Introduce vertex node:} Let $u\in\TT$ be an introduce-vertex-node with child $u^\pr\in\TT$ such that $X_u = X_{u^\pr}\cup\{w\}$ for some vertex $w\in\VV[\GG]\setminus X_{u^\pr}$. Since no edge is introduced in an introduce-vertex-node, $w$ is an isolated vertex in $\GG_u$. We set the table entry $\DD(u, (P_i)_{i\in[\mu]_0},$ $(Q_i)_{i\in[\mu]},(E_i)_{i\in[\mu]} g, \LL,a,s,e)$ to $\DD(u^\pr, (P_j\setminus\{w\},(P_i)_{i\in[\mu]_0\setminus \{j\}}), (Q_i)_{i\in[\mu]}, (E_i)_{i\in[\mu]}g, \LL,a,s,e)$ if $w\in P_j$ where $j=0$ or $P_j=\{w\}$, $g(j)=p$, $\LL(j)=0$, $s(j)=\{w\}$, $e(j)=\{w\}$, $a(j)=\mathbbm{1}(w\in\BB)$ where $\mathbbm{1}(\cdot)$ is the indicator function otherwise we set it to $-\infty$.

	\textbf{ Introduce edge node:} Let $u\in\TT$ be an introduce-edge-node introducing an edge $(x,y)\in\EE[\GG]$ and $u^\pr$ the child of $u$. That is, we have $X_u = X_{u^\pr}$. For the table entry $\DD(u, (P_i)_{i\in[\mu]_0}, (Q_i)_{i\in[\mu]},(E_i)_{i\in[\mu]}, g, \LL,a,s,e)$, if any of the following conditions hold
	\begin{itemize}
		\item there does not exist any $i\in[\mu]$ such that $x,y\in P_i$,

		\item there exists an $i\in[\mu]$ such that $x,y\in P_i$ and $(x,y)\notin E_i$,
	\end{itemize}
	then we set the entry to $\DD(u^\pr, (P_i)_{i\in[\mu]_0}, (Q_i)_{i\in[\mu]},(E_i)_{i\in[\mu]}, g, \LL,a,s,e)$. Otherwise, let us assume that there exists a $j\in[\mu]$ such that $x,y\in P_j$ and $(x,y)\in E_j$. Intuitively, we compute the appropriate DP table indices at node $u^\pr$ that we can get when we remove the edge $\{x,y\}$ from the current DP table index at node $u$. Then, we choose the maximum value among the DP table entries corresponding to DP table indices at node $u^\pr$ that we computed and add $1$ to it to get the value for the current table entry. Formally, we update the table entry as follows.
	\begin{itemize}

		\item If $g(j)=p$, then suppose we have $Q_j=x_1>\cdots>x_\lambda (=x)>x_{\lambda+1}(=y)>\cdots>x_\nu$ (note that $y$ should appear immediately after $x$ in $Q_j$, otherwise we would have already set the entry to $-\infty$). Let us define $\PP^\pr=(P_i^\pr)_{i\in[\mu+1]_0}, \QQ^\pr=(Q_i^\pr)_{i\in[\mu+1]}, \EE^\pr=(E_i^\pr)_{i\in[\mu+1]}, g^\pr:[\mu+1]\rightarrow\{p,c\}, a^\pr:[\mu+1]\rightarrow\{0,1\}, s^\pr:[\mu+1]\rightarrow X_u\cup \{\perp\}, e^\pr:[\mu+1]\rightarrow X_u\cup \{\perp\}$, and $\LL^k:[\mu+1]\rightarrow[\el]_0$ for any integer $k\in[\el]_0$ as follows.
		\begin{itemize}
			\item $P_j^\pr=\{x_1,\ldots,x_\lambda\}, P_{\mu+1}^\pr=\{x_{\lambda+1},\ldots, x_\nu\}, P_i^\pr=P_i$ for every $i\in[\mu]_0, i\ne j$

			\item $Q_j^\pr=x_1>\cdots>x_\lambda, Q_{\mu+1}^\pr=x_{\lambda+1}>\cdots> x_\nu, Q_i^\pr=Q_i$ for every $i\in[\mu], i\ne j$

			\item $E_j^\pr=E_j\cap(\{x_1,\ldots,x_\lambda\}\times\{x_1,\ldots,x_\lambda\}), E_{\mu+1}^\pr=E_j\cap(\{x_{\lambda+1},\ldots, x_\nu\}\times\{x_{\lambda+1},\ldots, x_\nu\}), E_i^\pr=E_i$ for every $i\in[\mu], i\ne j$

			\item $g^\pr(j)=g^\pr(\mu+1)=p$ and $g^\pr(i)=g(i)$ for every $i\in[\mu], i\ne j$

			\item $a^\pr(\mu+1)=0$ and $a^\pr(i)=a(i)$ for every $i\in[\mu]$
			\item $s^\pr(j)=s(j)$, $s^\pr(\mu+1)=y$ and $s^\pr(i)=s(i)$ for every $i\in[\mu]_0, i\ne j$
			\item $e^\pr(j)=x$, $e^\pr(\mu+1)=e(j)$ and $e^\pr(i)=e(i)$ for every $i\in[\mu]_0, i\ne j$

			\item $\LL^k(j)=k, \LL^k(\mu+1)=\LL(j)-1-k, \LL^k(i)=\LL(i)$ for every $i\in [\mu], i\ne j$
		\end{itemize}
		We now set the entry to
		\[ 1+\max_{k\in[\el]_0} \DD(u^\pr, \PP^\pr, \QQ^\pr, \EE^\pr, g^\pr, \LL^k, a^\pr, s^\pr, e^\pr) \]

		\item If $g(j)=c$, then let us define $g^\pr:[\mu]\rightarrow\{p,c\}$ as $g^\pr(j)=p$ and $g^\pr(i)=g(i)$ for every $i\in[\mu], i\ne j$. Let us also define $s^\pr:[\mu]\rightarrow X_u\cup\{\perp\}$ as $s^\pr(j)=y$ and $s^\pr(i)=s(i)$ for every $i\in[\mu], i\ne j$. Similarly, let us define $e^\pr:[\mu]\rightarrow X_u\cup\{\perp\}$ as $e^\pr(j)=x$ and $e^\pr(i)=e(i)$ for every $i\in[\mu], i\ne j$. We set the entry to $1+\DD(u^\pr, (P_i)_{i\in[\mu]_0}, (Q_i)_{i\in[\mu]},((E_i)_{i\in[\mu], i\ne j}, E_j\setminus\{(x,y)\}), g^\pr, \LL^\pr,a,s^\pr,e^\pr)$ where $\forall i\neq j,\LL^\pr(i)=\LL(i) ,\LL^\pr(j)=\LL(j)-1$.
	\end{itemize}

	\textbf{ Forget vertex node:} Let $u\in\TT$ be a forget node with a child $u^\pr$ such that $X_u$ = $X_{u^\pr} \setminus \{w\}$ for some $w \in X_{u^\pr}$. In this case, we update the table entry $\DD(u, (P_i)_{i\in[\mu]_0}, (Q_i)_{i\in[\mu]},(E_i)_{i\in[\mu]}, g, \LL,a,s,e)$ as follows. For an index $(u^\pr, (P_i^\prr)_{i\in[\mu]_0}, (Q_i^\prr)_{i\in[\mu]},(E_i^\prr)_{i\in[\mu]}, g^\prr, \LL^\prr,a^\prr,s^\prr,e^\prr)$ of the DP table at node $u^\pr$, we define $\DEL(u^\pr, (P_i^\prr)_{i\in[\mu]_0}, (Q_i^\prr)_{i\in[\mu]},(E_i^\prr)_{i\in[\mu]}, g^\prr, \LL^\prr,a^\prr,s^\prr,e^\prr)=(u, \PP^\pr, \QQ^\pr, \EE^\pr, g^\pr, \LL^\pr,a^\pr,s^\pr,e^\pr)$ as follows.

	\begin{itemize}
		\item if $w\in P_0^\prr$, then $\PP^\pr=(P_i^\pr)_{i\in[\mu]_0}, \QQ^\pr=(Q_i^\pr)_{i\in[\mu]},\EE^\pr=(E_i^\pr)_{i\in[\mu]}, P_0^\pr=P_0^\prr\setminus\{w\}, P_i^\pr=P_i^\prr, Q_i^\pr=Q_i^\prr, E_i^\pr=E_i^\prr$ for every $i\in[\mu]$ $g^\pr=g^\prr, \LL^\pr=\LL^\prr, a^\pr=a^\prr, s^\pr=s^\prr,e^\pr=e^\prr$


	\item if there exists $j\in[\mu]$ such that $w\in P_j^\prr$ and $|P_j^\prr|>1$, then
	\begin{itemize}
		\item $\PP^\pr=(P_i^\pr)_{i\in[\mu]}$ where $P_j^\pr=P_j^\prr\setminus\{w\}$, $P_i^\pr=P_i^\prr$ for every $i\in[\mu]_0, i\neq j$
		\item We define $\QQ^\pr=(Q_i^\pr)_{i\in[\mu]}$. Suppose we have $Q_j^\prr=x_1>\cdots>x_\lambda(=w)>\cdots>x_\nu$. Then, we define $Q^\pr_j=x_1>\cdots>x_{\lambda-1}>x_{\lambda+1}>\cdots>x_\nu$ if $\lambda\notin\{1,\nu\}$; $Q^\pr_j=x_2>\cdots>x_\nu$ if $\lambda=1$; $Q^\pr_j=x_1>\cdots>x_{\nu-1}$ if $\lambda=\nu$; $Q^\pr_i=Q_i$ for every $i\in[\mu], i\ne j$
		\item $\EE^\pr=(E_i^\pr)_{i\in[\mu]}$ where $E_j^\pr=E_j^\prr\cap(P_j^\pr\times P_j^\pr)$, $E_i^\pr=E_i^\prr$ for every $i\in[\mu],i\neq j$
		\item  $g^\pr(i)=g^\prr(i)$ for every $i\in[\mu]$
		\item $\LL^\pr(j)=\LL^\prr(j)-(|E_j^\prr|-|E_j^\pr|)$, $\LL^\pr(i)=\LL^\prr(i)$ for every $i\in[\mu],i\neq j$
		\item $a^\pr(i)=a^\prr(i)$ for every $i\in[\mu]$.
		\item $s^\pr(i)=s^\prr(i)$ for every $i\in[\mu],i\neq j$. If $s^\prr(j)=w$, then $s^\pr(j)=\perp$ otherwise $s^\pr(j)=s^\prr(j)$.
		\item $e^\pr(i)=e^\prr(i)$ for every $i\in[\mu],i\neq j$. If $e^\prr(j)=w$, then $e^\pr(j)=\perp$ otherwise $e^\pr(j)=e^\prr(j)$.
	\end{itemize}
	\item  if there exists $j\in[\mu]$ such that $P_j^\prr=\{w\}$, then
	\begin{itemize}
		\item if $a(j)=1$ and $g^\prr(j)=p$ or $a(j)=0$ and $g^\prr(j)=c$, then $\PP^\pr=(P_i^\prr)_{i\in[\mu]_0\setminus \{j\}}, \QQ^\pr=(Q_i^\prr)_{i\in[\mu]\setminus \{j\}},\EE^\pr=(E_i^\prr)_{i\in[\mu]\setminus \{j\}}$, $g^\pr=g^\prr, \LL^\pr=\LL^\prr, a^\pr=a^\prr, s^\pr=s^\prr,e^\pr=e^\prr$
		\item if $a(j)=0$ and $g^\prr(j)=p$, then $\PP^\pr=\QQ^\pr=\EE^\pr=g^\pr=\LL^\pr=a^\pr=s^\pr=e^\pr=\perp$. This ensures that the $\DEL(u^\pr, (P_i^\prr)_{i\in[\mu]_0}, (Q_i^\prr)_{i\in[\mu]},(E_i^\prr)_{i\in[\mu]}, g^\prr, \LL^\prr,a^\prr,s^\prr,e^\prr)$ doesn't correspond to any DP table index.
	\end{itemize}
\end{itemize}
Intuitively, applying $\DEL(.)$ on a DP table index at node $u^\pr$ leads to a DP table index at node $u$ which we get when we remove $w$ from the DP table index at node $u^\pr$.

We update the table entry $\DD(u, (P_i)_{i\in[\mu]_0}, (Q_i)_{i\in[\mu]},(E_i)_{i\in[\mu]}, g, \LL,a,s,e)$ as
\begin{align*}
	&\max\{\DD(u^\pr, (P_i^\prr)_{i\in[\mu]_0}, (Q_i^\prr)_{i\in[\mu]},(E_i^\prr)_{i\in[\mu]}, g^\prr, \LL^\prr,a^\prr,s^\prr,e^\prr):\\
	&\DEL(u^\pr, (P_i^\prr)_{i\in[\mu]_0}, (Q_i^\prr)_{i\in[\mu]},(E_i^\prr)_{i\in[\mu]}, g^\prr, \LL^\prr,a^\prr,s^\prr,e^\prr)\\
	&=(u, (P_i)_{i\in[\mu]_0}, (Q_i)_{i\in[\mu]},(E_i)_{i\in[\mu]}, g, \LL,a,s,e)\}
\end{align*}

\textbf{ Join node:} For a join node $u$, let $u_1,u_2$ be its two children. Note that, we have $X_u=X_{u_1}=X_{u_2}$. We say that the following pair of tuple $((u_1, (P_i^1)_{i\in[\mu_1]_0}, (Q_{i}^1)_{i\in[\mu_1]},(E_{i}^1)_{i\in[\mu_1]}, g^1, \LL^1,a^1,s^1,e^1), (u_2, (P_i^2)_{i\in[\mu_2]_0}, (Q_{i}^2)_{i\in[\mu_2]},(E_{i}^2)_{i\in[\mu_2]}, g^2,$ $ \LL^2,a^2,s^2,e^2))$ is said to be compatible with $(u, (P_i)_{i\in[\mu]_0},(Q_i)_{i\in[\mu]},$
$(E_i)_{i\in[\mu]}, g, \LL,a,s,e)$ if the following happens
\begin{itemize}
	\item $\forall i\in[\mu]$, $\exists i_1,i_2,\ldots,i_{k_1}\in [\mu_1]$ and $\exists i_1^\pr,i_2^\pr,\ldots,i_{k_2}^\pr\in [\mu_2]$ such that:
	\begin{itemize}
		\item $P_i\setminus (P_{i_1}^1\cup P_{i_2}^1\cup\ldots\cup P_{i_{k_1}}^1)\subseteq P_0^1$ and $P_i\setminus (P_{i_1^\pr}^2\cup P_{i_2^\pr}^2\cup\ldots\cup P_{i_{k_2^\pr}}^2)\subseteq P_0^2$
		\item $P_{i_1}^1\cup P_{i_2}^1\cup\ldots\cup P_{i_{k_1}}^1\cup P_{i_1^\pr}^2\cup P_{i_2^\pr}^2\cup\ldots\cup P_{i_{k_2}^\pr}^2= P_i$.
		\item $E_i=E_{i_1}^1\cup E_{i_2}^1\cup\ldots\cup E_{i_{k_1}}^1\cup E_{i_1^\pr}^2\cup E_{i_2^\pr}^2\cup\ldots\cup E_{i_{k_2^\pr}}^2$
		\item Let $\QQ=\{Q_{i_1}^1,\ldots,Q_{k_1}^1,Q_{i_1^\pr}^2,\ldots,Q_{i_{k_2}^\pr}^2\}$. We now define a directed multigraph $\GG_i=(P_i\cup\{a,b\},\EE_i)$ where $\EE_i$ is a multiset. $\forall Q_{j_2}^{j_1}\in \QQ$ we do the following:
		\begin{itemize}
			\item Let $Q_{j_2}^{j_1}=x_1>\cdots>x_{\nu}$. Then, add the edges in $\{(x_i,x_{i+1}):i\in[\nu-1]\}$ to  $\EE_i$ .
			\item if $g^{j_1}(j_2)=c$, then add $(x_{\nu},x_1)$ to $\EE_i$.
			\item if $g^{j_1}(j_2)=p$, $e^{j_1}(j_2)=\perp$, then add $(x_{v},b)$ to $\EE_i$.
			\item if $g^{j_1}(j_2)=p$, $s^{j_1}(j_2)=\perp$,  then add $(a,x_1)$ to $\EE_i$
		\end{itemize}
		Similarly, we define a  directed multigraph $\GG_i^\pr=(P_i\cup\{a,b\},\EE_i^\pr)$ by replacing $\EE_i$ and $\QQ$ by $\EE_i^\pr$ and $\{Q_i\}$, respectively. Then, $\EE_i=\EE_i^\pr$, $\LL^1(i_1)+\ldots+\LL^1(i_{k_1})+\LL^2(i_1^\pr)+\ldots+\LL^2(i_{k_2}^\pr)=\LL(i)$ and $a(i)=\max\{a^1(i_1),\ldots,a^1(i_{k_1}),a^2(i_1^\pr),\ldots,a^2(i_{k_2}^\pr)\}$.
	\end{itemize}
	\item $P_{0}\subseteq P_0^1$ and $P_{0}\subseteq P_0^2$.

\end{itemize}
Intuitively, a pair of tuples is said to be compatible with the current DP table index at node $u$, if partitions in the pair of tuples can be used to construct the tuple at our current DP table index at node $u$. For instance, we check whether we can use the paths corresponding to the partitions in the pair to construct the paths corresponding to the partition in the current DP table index at node $u$.

Let $T^C$ be the set of all pair of tuples $(T_1,T_2)$ which are compatible as per the above conditions with respect to $(u, (P_i)_{i\in[\mu]_0}, (Q_i)_{i\in[\mu]},(E_i)_{i\in[\mu]}, g, \LL,a,s,e)$. Then,
\begin{align*}
	\DD(u, (P_i)_{i\in[\mu]_0}, (Q_i)_{i\in[\mu]},(E_i)_{i\in[\mu]}, g, \LL,a,s,e)
	=\max_{(T_1,T_2)\in T^C}\DD(T_1)+\DD(T_2)
\end{align*}

Each table entry can be updated in $\el^{\OO(\tw)}\tw^{\OO(\tw)}$ time and the size of each table in any node of the tree decomposition is $\el^{\OO(\tw)}\tw^{\OO(\tw)}$. Hence, the running time of our algorithm is $\OO^*(\el^{\OO(\tw)}\tw^{\OO(\tw)})$.
	\end{proof}

We again note that although the treewidth \tw of the underlying undirected graph may not always be small, it is often significantly smaller than the number $n$ of the donors. Hence, the \FPT algorithm of \Cref{thm:fpt_tw-1} will often outperform the exact $\OO(2^n n^3)$ time algorithm of Xiao and Wang~\cite{xiao2018exact}.


Let $\theta$ denote the number of vertex types in a graph $\GG(\VV,\EE)$. Xiao and Wang~\cite{xiao2018exact} presented an \FPT algorithm parameterized by $\theta$ when $\el_p=\el_c=|\VV|$. We improve the result by presenting an \FPT algorithm parameterized by $\theta$ when $\el_p\leq \el_c$. Towards that, we first present an important lemma on the structure of an optimal solution.

\begin{lemma}\label{lem:theta}
	In every \KE problem instance when $\el_p\leq \el_c$, there exists an optimal solution where the length of every path and cycle in that solution is at most $\theta+3$.
\end{lemma}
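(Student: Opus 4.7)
The plan is to argue by strict-improvement from any optimal solution $S$. I will show that any cycle or path in $S$ of length larger than $\theta+3$ can be replaced by two strictly shorter structures covering exactly the same vertex set and respecting both $\el_c$ and $\el_p$. Iterating this replacement inside $S$ — each step strictly reducing, say, the maximum structure length — terminates and yields an optimal solution of the required form. The engine of the replacement is the pigeonhole principle applied to vertex types together with the ``shortcut arcs'' granted for free by any two same-type vertices $u,v$: since such $u$ and $v$ satisfy $N^-(u)=N^-(v)$ and $N^+(u)=N^+(v)$, every in-neighbor (resp.\ out-neighbor) of one is automatically an in-neighbor (resp.\ out-neighbor) of the other.

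For the cycle case, take a cycle $C=v_1 v_2 \ldots v_k v_1$ in $S$ with $k>\theta+3$. I will apply pigeonhole to the first $\theta+1$ vertices $v_1,\ldots,v_{\theta+1}$ and pick $i<j$ with $v_i$ and $v_j$ of the same type. Two same-type vertices cannot be adjacent in $\GG$ (otherwise a self-loop would be forced), so $j-i\geq 2$; and since $j-i\leq\theta\leq k-4$, no wrap-around adjacency arises either. The shortcut arcs $(v_{i-1},v_j)$ and $(v_{j-1},v_i)$ belong to $\GG$, so I will split $C$ into the two disjoint cycles $v_i v_{i+1}\ldots v_{j-1}v_i$ and $v_1 \ldots v_{i-1}v_j v_{j+1}\ldots v_k v_1$. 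Each has length at most $k-2$, lies entirely on non-altruistic vertices, and their union is the vertex set of $C$, so optimality is preserved.

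The path case is analogous but needs a little more care because the first vertex must remain an altruistic vertex. For a path $P=v_1 v_2 \ldots v_k$ with $v_1\in\BB$ and $k-1>\theta+3$, I will restrict pigeonhole to the $k-2>\theta$ internal non-altruistic vertices $v_2,\ldots,v_{k-1}$, obtaining $2\leq i<j\leq k-1$ of the same type. The shortcut arcs $(v_{i-1},v_j)$ and $(v_{j-1},v_i)$ again exist, letting me replace $P$ by the shorter path $v_1\ldots v_{i-1}v_j v_{j+1}\ldots v_k$ (still beginning with $v_1\in\BB$) together with the cycle $v_i v_{i+1}\ldots v_{j-1}v_i$ on non-altruistic vertices only. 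The new path is strictly shorter than $\el_p$, and the new cycle has length $j-i\leq k-3\leq \el_p-2\leq \el_c$; it is precisely the last inequality, depending on the hypothesis $\el_p\leq\el_c$, that prevents the spawned cycle from exceeding the cycle budget. I expect the only real subtlety to be this last bit of bookkeeping — verifying that both spawned structures are valid KE structures (correct altruistic start, no altruistic vertex inside a cycle, lengths within the prescribed bounds) and that the iteration terminates — both of which are immediate once the splitting step is phrased as above.
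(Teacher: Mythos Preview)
Your proposal is correct and follows essentially the same approach as the paper: pigeonhole on vertex types to find two same-type vertices on a long path/cycle, use the induced shortcut arcs to split it into two strictly shorter feasible structures covering the same vertex set, and iterate. The only differences are cosmetic --- you use in-neighbor shortcuts $(v_{i-1},v_j)$ and $(v_{j-1},v_i)$ where the paper uses out-neighbor shortcuts $(v_a,v_{b+1})$ and $(v_b,v_{a+1})$, and you apply pigeonhole to the first $\theta+1$ cycle vertices whereas the paper restricts to interior indices $1<a<b<k$ to sidestep the boundary case $i=1$ in your cycle notation; neither change affects the argument.
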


\begin{proof}
	Let \CC (which is a collection of paths starting from altruistic vertices and cycles) be an optimal solution of an instance \II of the \KE problem with $\el_p\leq \el_c$. If the length of every path and cycle in \CC is at most $\theta+3$, then there is nothing to prove. So let us assume otherwise. That is, there either exists a path starting from an altruistic vertex of length more than $\theta+3$ or a cycle of length more than $\theta+3$ in \CC (or both). We now construct another solution $\CC_1$ as follows.

	{\bf Case I: } Suppose there exists a path $p=v_1 v_2 \ldots v_k$ in \CC where $v_1$ is an altruistic vertex and $\el_p\ge k>\theta+3$. By pigeonhole principal, there exists a vertex type \gamma such that the vertices $v_a$ and $v_b$ are of type \gamma for some indices $1< a<b< k$. Note that $b\neq a+1$ as vertices of the same type form an independent set. Consider a path $p_1$ defined as $p_1: v_1 \ldots v_a v_{b+1}\ldots v_k$ and cycle $q$ defined as $q: v_{a+1}\ldots v_b v_{a+1}$. Since there exists an edge from $v_b$ to $v_{b+1}$ and the vertices $v_a$ and $v_b$ are of same type (of type \gamma), there exists an edge from $v_a$ to $v_{b+1}$. Similarly, since there exists an edge from $v_a$ to $v_{a+1}$ and the vertices $v_a$ and $v_b$ are of same type (of type \gamma), there exists an edge from $v_b$ to $v_{a+1}$. Hence, the path $p_1$ and cycle $q$ is well-defined. We now consider the solution $\CC_1 = (\CC\setminus\{p\})\cup\{p_1, q\}$. We observe that both \CC and $\CC_1$ cover the same number of non-altruistic vertices. $\CC_1$ is a valid solution as length of $p_1$ and $q$ is at most $\el_p$ and we have $\el_p\leq \el_c$.

	{\bf Case II: } Suppose there exists a cycle $q=v_1 v_2 \ldots v_k v_1$ in \CC where $\el_c\ge k>\theta+3$. By pigeonhole principal, there exists a vertex type \gamma such that the vertices $v_a$ and $v_b$ are of type \gamma for some indices $1< a<b< k$. Note that $b\neq a+1$ as vertices of the same type form an independent set. Consider cycles $q_1: v_a v_{a+1}\ldots v_{b-1} v_a$ and $q_2: v_1 \ldots v_{a-1} v_b v_{b+1}\ldots v_k v_1$. Since there is an edge from $v_{b-1}$ to $v_b$ and the vertices $v_a$ and $v_b$ are of same type, there exists an edge from $v_{b-1}$ to $v_a$. Similarly, since there exists an edge from $v_{a-1}$ to $v_a$ and the vertices $v_a$ and $v_b$ are of same type, there exists an edge from $v_{a-1}$ to $v_b$. Hence, the cycle $q_1$ and $q_2$ are well-defined. We now consider the solution $\CC_1 = (\CC\setminus\{q\})\cup\{q_1, q_2\}$. We observe that both \CC and $\CC_1$ cover the same number of non-altruistic vertices. $\CC_1$ is a valid solution as length of $q_1$ and $q_2$ is at most $\el_c$.

	If $\CC_1$ has any path or cycle of length more than $\theta+3$, then we apply the above procedure to construct another optimal solution $\CC_2$. We repeat the procedure as long as the current optimal solution has any path or cycle of length more than $\theta+3$. The above process terminates after finite number of steps with an optimal solution $\CC_s$ since we are always replacing longer path/cycle with shorter paths and cycles. Hence, in the optimal solution $\CC_s$, the length of every path and cycle is at most $\theta+3$ which proves the result.
\end{proof}

We use the following useful result by Lenstra to design our \FPT algorithm.

\begin{lemma}[Lenstra's Theorem~\cite{lenstra1983integer}]\label{lenstra}
There is an algorithm for computing a feasible as well as an optimal solution of an integer linear program which is fixed parameter tractable parameterized by the number of variables.
\end{lemma}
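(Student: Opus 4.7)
The plan is to establish fixed-parameter tractability of integer linear programming parameterized by the number of variables $n$, following the classical recursive approach based on the geometry of numbers. I would first reduce optimization to feasibility by binary-searching on the objective value (using the standard rational-bit bounds on ILP optima), so it suffices to decide whether a polytope $P = \{x \in \mathbb{R}^n : Ax \le b\}$ contains a point of $\mathbb{Z}^n$.

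The first step is to solve the LP relaxation in polynomial time. If it is infeasible, the ILP is infeasible. Otherwise, the core ingredient is the \emph{flatness theorem} of Khinchine, in Kannan's effective form: there is a function $f(n)$ such that every convex body $K \subseteq \mathbb{R}^n$ either contains an integer point or admits a nonzero direction $c \in \mathbb{Z}^n$ whose width over $K$ satisfies
\[
\max_{x \in K} c^{T} x \;-\; \min_{x \in K} c^{T} x \;\le\; f(n).
\]
Such a $c$ can be produced algorithmically by first sandwiching $P$ between two concentric ellipsoids (via the L\"owner--John ellipsoid, computable by the ellipsoid method), and then running the LLL lattice basis reduction algorithm on the lattice obtained by pulling $\mathbb{Z}^n$ back through the linear map defining the inner ellipsoid; the shortest vector of the reduced basis yields the desired flat direction.

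Given $c$, I would enumerate all integer values $v$ in the interval $[\lceil \min_{x \in P} c^{T} x \rceil,\, \lfloor \max_{x \in P} c^{T} x \rfloor]$, which contains at most $f(n)+1$ integers, and recurse on the sliced polytope $\{x : Ax \le b,\ c^{T} x = v\}$ for each value of $v$. Eliminating the linear equation $c^{T} x = v$ reduces the dimension by one, so the recursion has depth $n$ and branching factor $f(n)$, leading to an overall running time of $f(n)^{O(n)} \cdot \mathrm{poly}(L)$, where $L$ is the encoding length; this is FPT in $n$.

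The main obstacle is making the flatness theorem constructive with a bound $f(n)$ small enough to keep the recursion FPT. A naive approach only gives an existence statement; the crucial input is the LLL basis reduction applied to the ellipsoidal approximation, combined with Kannan's sharpening that yields $f(n) = n^{O(n)}$. Once this algorithmic flatness step is in place, the recursive branching and dimension reduction are routine, and the resulting algorithm solves ILP feasibility (and hence optimization) in time that depends polynomially on the input length and only on $n$ through the tower $n^{O(n^{2})}$.
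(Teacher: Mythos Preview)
The paper does not prove this lemma at all: it is stated as Lenstra's Theorem with a citation to~\cite{lenstra1983integer} and used as a black box in the proof of \Cref{thm:fpt_theta}. Your proposal is a faithful sketch of the classical Lenstra--Kannan argument (flatness theorem plus LLL-based flat-direction finding, then recursive branching on the $f(n)+1$ hyperplane slices), which is exactly the content behind the cited reference; so while the paper offers nothing to compare against, your outline is the standard route to the result and is correct at the level of a proof plan.
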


Using \Cref{lem:theta,lenstra}, we develop an \FPT algorithm for \KE parameterized by the number \theta of types of vertices in \Cref{thm:fpt_theta}.

\begin{theorem}\label{thm:fpt_theta}
	There exists an \FPT algorithm for the \KE problem parameterized by $\theta$ when $\el_p\leq\el_c$.
\end{theorem}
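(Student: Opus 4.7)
The plan is to reduce the \KE problem to an integer linear program whose number of variables depends only on $\theta$, and then invoke Lenstra's theorem (\Cref{lenstra}). The structural lemma \Cref{lem:theta} is the crucial enabler: since $\el_p \leq \el_c$, some optimal solution uses only paths and cycles of length at most $\theta + 3$, hence of order at most $\theta + 4$. I will refine the vertex type definition so that altruistic vs.\ non-altruistic vertices always belong to different types (this at most doubles $\theta$), so each type $\gamma$ is entirely altruistic or entirely non-altruistic, with $n_\gamma$ vertices.

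I would define a \emph{pattern} to be either a path-type-sequence $(\gamma_1,\gamma_2,\ldots,\gamma_k)$ with $k \leq \theta + 4$, $\gamma_1$ an altruistic type, and with $(\gamma_i,\gamma_{i+1})$ a valid edge in the ``type graph'' for every $i$; or a cycle-type-sequence of the same bounded length (considered up to cyclic rotation) again respecting edges in the type graph. Because vertices of the same type share their in- and out-neighbourhoods, the existence of such edges is well defined on types. Moreover, consecutive positions of a pattern must use distinct types (two same-type vertices form an independent set), but non-consecutive positions may repeat types, in which case distinct physical vertices will be assigned. The total number of patterns is bounded by $(\theta+4) \cdot \theta^{\theta+4}$, a function of $\theta$ alone.

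Introduce one nonnegative integer variable $x_P$ per pattern $P$, intended to count how many disjoint copies of $P$ appear in the chosen solution. For each type $\gamma$, let $c_\gamma(P)$ denote the number of positions in $P$ of type $\gamma$; the constraint
\[
\sum_{P}\, c_\gamma(P)\, x_P \;\leq\; n_\gamma \qquad \text{for every type } \gamma
\]
ensures that each type's vertex budget is respected. Let $b(P)$ denote the number of non-altruistic positions in $P$; the feasibility check is then
\[
\sum_{P}\, b(P)\, x_P \;\geq\; t.
\]
The ILP has at most $(\theta+4) \cdot \theta^{\theta+4}$ variables and $\theta$ constraints, so by \Cref{lenstra} it is solvable in \FPT time parameterized by $\theta$.

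The main obstacle is realizability: I must argue that any feasible assignment $\{x_P\}$ to the ILP actually corresponds to a valid collection of vertex-disjoint paths and cycles in $\GG$. This is where the same-type interchangeability pays off. For each type $\gamma$, the budget constraint guarantees enough unused vertices of type $\gamma$ to fill every slot of every pattern copy by a distinct physical vertex; since edges in $\GG$ are determined solely by the endpoints' types, every consecutive pair of slots within a pattern receives a genuine arc, so each placed pattern becomes a genuine path or cycle of the required length. Conversely, by \Cref{lem:theta} every optimal solution (after possibly restructuring as in that lemma) gives rise to a feasible $\{x_P\}$ with the same covering value, so the ILP optimum equals the \KE optimum. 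Combined with Lenstra's theorem, this yields the claimed \FPT algorithm.
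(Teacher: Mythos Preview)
Your proposal is correct and follows essentially the same approach as the paper: bound the length of paths and cycles via \Cref{lem:theta}, enumerate all type-sequences (``signatures'' in the paper, ``patterns'' in your write-up) of bounded length, set up an ILP with one variable per signature and one budget constraint per type, and invoke \Cref{lenstra}. The only cosmetic difference is that the paper, instead of splitting each type into altruistic and non-altruistic halves, observes that any non-altruistic vertex sharing a type with an altruistic one must have in-degree $0$ and can simply be deleted.
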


\begin{proof}
	Let $(\GG=(\VV,\EE),\BB,\el_p,\el_c,t)$ be an arbitrary instance of the \KE problem where $\el_p\leq\el_c$. We denote the set of types in \GG by \Gamma and the type of vertex $v$ by $\gamma(v)$.  For a type $\gamma\in\Gamma$, let the set of vertices of type \gamma be $V_\gamma\subseteq \VV$. If there is a $\gamma\in\Gamma$ such that $V_\gamma$ contains both altrusitic and non-altruistic vertices, then we remove the non-altruistic vertices as they have in degree $0$ and can't be part of any feasible solution. For a path/cycle $p=v_1 \ldots v_k$, ``signature of $p$'' is defined as $\gamma(v_1)\ldots\gamma(v_k)$ and we denote it by $\gamma(p)$. For a type $\gamma \in \Gamma$, we denote the number of vertices of type \gamma in \GG by $n(\gamma)$. For a type $\gamma \in \Gamma$ and a path/cycle/signature-sequence $p$, we denote the number of vertices of type \gamma in $p$ by $n_p(\gamma)$. For a path/cycle/signature-sequence $p$, we denote the number of non-altruistic vertices in $p$ by $\lambda(p)$.  Let \AA denote the set of signatures of paths starting from altruistic vertices of length at most $\min\{\theta+4,\el_p\}$ and signatures of cycles of length at most $\min\{\theta+3,\el_c\}$ in \GG. Since there are \theta types, we have $|\AA|=\OO(\theta^\theta)$. We can compute the set \AA in time $\OO^*(\theta^\theta)$ as follows. For each possible signature-sequence $p=\gamma_1\ldots\gamma_k$ of a path, we check if there is an edge from a vertex in $V_{\gamma_i}$ to a vertex in $V_{\gamma_{i+1}}$ for all $i\in[k-1]$ and  $n_p(\gamma)\leq n(\gamma)$ for all $\gamma\in\Gamma$. If the conditions hold true, then we add $p$ to the set \AA. Similarly, for each possible signature-sequence $p=\gamma_1\ldots\gamma_k$ of a cycle, we check if there is an edge from a vertex in $V_{\gamma_i}$ to a vertex in $V_{\gamma_{(i\text{ mod } k)+1}}$ for all $i\in[k]$ and  $n_p(\gamma)\leq n(\gamma)$ for all $\gamma\in\Gamma$. If the conditions hold true, then we add $p$ to the set \AA.

We consider the following integer linear program; its variables are $x(p)$ for every $p\in\AA$.
	\begin{align*}
		\text{max} & \sum_{p\in\AA} \lambda(p) x(p)\\
		\text{subject to:}& \sum_{p\in\AA} n_p(\gamma) x(p) \le n(\gamma) & \forall \gamma\in\Gamma\\
		& x(p)\in\{0,1,\ldots,|\VV|\} & \forall p\in\AA
	\end{align*}

	We claim that the \KE instance is a \YES instance if and only if the optimal value of the above ILP is at least $t$.

	In one direction, suppose the \KE instance is a \YES instance. Let \CC be an optimal solution (a multi-set of paths and cycles) of the \KE instance. For signature sequence $p\in\AA$, we define $x(p) = \sum_{q\in\CC} \mathbbm{1}(p=\gamma(q))$ where $\mathbbm{1}(\cdot)$ is the indicator function. As $\CC$ is a subgraph of $\GG$, we have $\sum_{p\in\AA} n_p(\gamma)\sum_{q\in\CC} \mathbbm{1}(p=\gamma(q))\leq  n(\gamma)$, $\forall \gamma\in \Gamma$. Hence $x(p)_{p\in\AA}$ is a feasible solution. Since \CC covers at least $t$ non-altruistic vertices, it follows that $\sum_{p\in\AA} \lambda(p) x(p)\ge t$. Hence the optimal value of the above ILP is at least $t$.

	On the other direction, suppose there exists a solution $(x^*(p))_{p\in\AA}$ to the ILP such that $\sum_{p\in\AA} \lambda(p) x^*(p)\ge t$. We now describe an iterative approach to construct a solution \CC for the \KE instance from $(x^*(p))_{p\in\AA}$. We initialize \CC to the empty set and initialize $x^\pr(w)$ to 0 for all $w\in\AA$. Let $n_\CC(\gamma)$ denote the number of vertices of type $\gamma$ in \CC. Till there exists a $w\in\AA$ such that $x^\pr(w)<x^*(w)$, we add to $\CC$ a path/cycle $q$ of signature $w$ belonging to $\GG\setminus\CC$ and increase $x^\pr(w)$ by $1$.  Now we show that if there exists a $w\in\AA$ such that $x^\pr(w)<x^*(w)$, then there is always a path/cycle $q$ of signature $w$ in $\GG\setminus\CC$. Due to the way \AA is defined and the fact that $w\in\AA$, it suffices to show that $n_w(\gamma)\leq n(\gamma)-n_\CC(\gamma)$, $\forall\gamma\in\Gamma$. Since $(x^*(p))_{p\in\AA}$ is a feasible solution, $\sum_{p\in\AA} n_p(\gamma) x^*(p)\leq n(\gamma)$, $\forall \gamma \in \Gamma$. Therefore $n_w(\gamma)\leq\sum_{p\in\AA} n_p(\gamma) (x^*(p)-x^\pr(p))\leq n(\gamma)-\sum_{p\in\AA} n_p(\gamma) x^\pr(w)=n(\gamma)-n_\CC(\gamma)$, $\forall \gamma \in \Gamma$. Hence there is always a path/cycle $q$ of signature $w$. Now when the iterative procedure terminates, the number of non-altruistic vertices covered by \CC is $\sum_{p\in\AA} \lambda(p) x^*(p)$ which is at least $t$. Hence, the \KE instance is a \YES instance.	Now the result follows from \Cref{lenstra}.
\end{proof}

\subsection{Para-NP-Hardness Result}

In this section, we present an intractability result for \KE parameterized by $(\Delta+\el_p+\el_c)$.

\begin{definition}[X3C$^\prime$] Given a finite set $X = \{x_1, x_2,
\ldots , x_{3q}\}$ and a collection of 3-element subsets $C = \{(x_{i_h}, x_{j_h}, x_{k_h})| 1 \leq h\leq m\}$ of $X$ in which no element in $X$ appears in more than three subsets, compute if there exists a sub-collection $C’$ of $C$ such that every element in $X$ occurs in exactly one member of $C’$.
\end{definition}

It turns out that the proof of Abraham et al.~\cite[Theorem 1]{AbrahamBS07}, which shows \NP-completeness of the \KE problem, can be appropriately modified to get the \Cref{obs:dl}. We use X3C$^\prime$ problem which is known to be \NPC \cite{gonzalez1985clustering}.

\begin{observation}\label{obs:dl}
	The \KE problem, parameterized by $(\Delta+\el_p+\el_c)$, is \PNPH.
\end{observation}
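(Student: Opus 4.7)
The plan is to mimic the \NPH reduction of Abraham et al.~\cite{AbrahamBS07} but start from \XTC$^\prime$ rather than an unrestricted covering problem. The bounded-occurrence property of \XTC$^\prime$ (every element lies in at most three 3-subsets) will be what forces the compatibility graph to have bounded maximum degree, and I will make sure that the gadgets I build are all of some fixed constant size so that $\el_p$ and $\el_c$ can both be chosen to be absolute constants.

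Concretely, given an \XTC$^\prime$ instance $(X=\{x_1,\ldots,x_{3q}\},\, C=\{c_1,\ldots,c_m\})$ I would build $\GG$ as follows: introduce one ``element'' vertex $v_i$ for each $x_i\in X$, and for each 3-subset $c_h=(x_{i_h},x_{j_h},x_{k_h})$ attach a small ``subset'' gadget $H_h$ on a constant number of fresh vertices whose only interaction with the rest of $\GG$ is through directed edges to/from $v_{i_h},v_{j_h},v_{k_h}$, designed so that the unique way a trading-cycle can simultaneously cover $v_{i_h}$, $v_{j_h}$, $v_{k_h}$ is to traverse exactly the cycle prescribed by $H_h$ (this is precisely the role played by the corresponding gadget of Abraham et al., essentially unchanged). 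Set $\BB=\emptyset$ (so paths are irrelevant and we can take $\el_p=0$), let $\el_c$ be the fixed cycle length of a single gadget, and set the target to $t=3q$ (plus the gadget-internal vertices if they are required to be covered).

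Correctness will split in the usual two directions. If $C^\pr\subseteq C$ is an exact cover then the corresponding $q$ gadget-cycles are vertex-disjoint and together cover every element vertex, so the \KE instance is \YES. Conversely, any feasible collection covering all $3q$ element vertices must—by the uniqueness property of the gadget—use for each $v_i$ a full gadget-cycle of some $c_h\ni x_i$; by vertex-disjointness the chosen subsets form a partition of $X$, i.e., an exact cover. The parameter bound then follows by inspection: $\el_p+\el_c=O(1)$ by construction, every gadget-internal vertex has $O(1)$ degree, and every element vertex $v_i$ is touched by at most three gadgets (because $x_i$ lies in at most three subsets), each contributing $O(1)$ incident edges, so $\Delta=O(1)$ as well.

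The part that needs care, and which I anticipate as the main obstacle, is the gadget: it must be simultaneously (i) short enough to fit into the fixed cycle budget $\el_c$, (ii) structured so that covering all three of $v_{i_h},v_{j_h},v_{k_h}$ forces exactly one dedicated short cycle through $H_h$, and (iii) rigid enough to rule out ``parasitic'' cycles that would pick up an element vertex through one gadget without committing that gadget's subset to the cover. I would reuse the gadget of~\cite[Theorem~1]{AbrahamBS07} (or a minor variant obtained by duplicating/removing a couple of internal vertices to tighten the length), verify these three properties, and then simply observe that the bounded-occurrence restriction imposed by \XTC$^\prime$ turns the resulting graph into one of constant maximum degree, which delivers the claimed \PNPH.
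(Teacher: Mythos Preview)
Your proposal is correct and follows essentially the same route as the paper: reduce from \XTC$^\prime$, reuse the gadget of Abraham et al.~\cite[Theorem~1]{AbrahamBS07} with $\BB=\emptyset$, $\el_p=0$, $\el_c=3$, and observe that the at-most-three-occurrences restriction bounds $\Delta$ by a constant (the paper gets $\Delta\le 6$). The only point to tighten is the target: the Abraham et al.\ gadget has nine internal vertices per subset that must all be covered in either configuration, so one needs $t=9|C|+|X|$ rather than $t=3q$---your parenthetical ``plus the gadget-internal vertices'' is exactly right and should be made explicit.
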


\begin{proof}
	Consider an instance $(X,C)$ of X3C$^\prime$. For each subset in $C$, create a gadget as mentioned in the proof of Abraham et al.~\cite[Theorem 1]{AbrahamBS07}. Now set $t=9|C|+|X|$, $\el_p=0$ and $\el_c=3$. This completes the construction of the instance of the \KE problem. The correctness of the reduction follows from proof of Abraham et al.~\cite[Theorem 1]{AbrahamBS07}. The instance of \KE problem which gets created has $\el_p=0$, $\el_c=3$ and $\Delta\leq 6$. Hence, the \KE problem, parameterized by $\Delta+\el_p+\el_c$ is \PNPH.
\end{proof}

\subsection{Kernelization Results}

We now present our result on kernelization for \KE. Note that kernelization algorithms are often found quite useful in practice for drastically reducing the size of the input instance. We show that the \KE problem admits a polynomial kernel for the parameter $t+\Delta$ for every constant $\el_p$ and $\el_c$. Formally we show the following result.

\begin{theorem}\label{thm:kernel}
	For the \KE problem, there exists a vertex kernel of size $\OO(t\Delta^{\max\{\el_p,\el_c\}})$.
\end{theorem}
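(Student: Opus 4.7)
The plan is to construct the kernel by greedily extracting a maximal packing of short cycles and short altruistic-starting paths, and then restricting to the $\el$-neighbourhood of the vertices used in this packing.

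First, I would compute in polynomial time a maximal collection $\CC_0$ of vertex-disjoint cycles of length at most $\el_c$ and paths of length at most $\el_p$ starting from altruistic vertices. Since $\el = \max\{\el_p,\el_c\}$ is constant, any individual short cycle or short altruistic-starting path can be detected by enumerating tuples of size $\le \el+1$, and a maximal packing is obtained by iterating this detection, deleting the vertices found, and repeating. If $\CC_0$ already covers at least $t$ non-altruistic vertices, I output a trivial \YES instance. Otherwise $\CC_0$ has at most $t$ components (each covers at least one non-altruistic vertex) and each component has at most $\el+1$ vertices, so the vertex set $S = V(\CC_0)$ satisfies $|S| \le t(\el+1)$.

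Next, let $N = N_{\el}[S]$ be the set of vertices at undirected distance at most $\el$ from $S$ in the underlying undirected graph. A standard BFS-style bound gives
\[
|N| \;\le\; |S|\cdot\bigl(1+\Delta+\Delta^{2}+\cdots+\Delta^{\el}\bigr) \;=\; \OO(t\,\Delta^{\el}),
\]
since $\el$ is a constant. The kernel I output is $\bigl(\GG[N], \BB\cap N, \el_p, \el_c, t\bigr)$, whose vertex count meets the required bound.

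For correctness, the forward direction is immediate: any solution in $\GG[N]$ is also a solution in $\GG$ because the altruistic set, edges, and parameters are inherited. For the reverse direction, let $\CC^{*}$ be any solution to the original instance. For every path or cycle $D\in\CC^{*}$, if $V(D)\cap S = \emptyset$ then $D$ is vertex-disjoint from $V(\CC_0)$ and could be adjoined to $\CC_0$, contradicting the maximality of $\CC_0$. Hence every $D\in\CC^{*}$ meets $S$; combined with the fact that $D$ has at most $\el+1$ vertices, this forces $V(D)\subseteq N_{\el}[S]=N$. Therefore $\CC^{*}\subseteq \GG[N]$ and the kernel instance is \YES.

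I expect no serious obstacle here: the only delicate point is the polynomial-time construction of the maximal packing, which is straightforward because $\el$ is constant, and the ``every component meets $S$'' step that underlies the inclusion $V(\CC^{*})\subseteq N$ is a one-line maximality argument. The rest is essentially a neighbourhood-growth calculation controlled by the degree bound $\Delta$.
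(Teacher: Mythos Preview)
Your proof is correct and rests on the same two ingredients as the paper: a maximal packing of short cycles and altruistic-rooted paths, followed by a degree-based neighbourhood-growth bound. The only structural difference is in how the kernel is actually defined. The paper first applies a reduction rule deleting every vertex that lies on no cycle of length $\le\el_c$ and on no altruistic path of length $\le\el_p$, and then argues that the \emph{entire} remaining graph has size $\OO(t\Delta^{\el})$; the maximal packing appears only in the size analysis, not in the construction. You skip the reduction rule entirely and output $\GG[N_{\el}[S]]$ directly, so the maximal packing is part of the kernelization algorithm itself. Both routes reach the same bound; yours is a bit more streamlined, while the paper's kernel has the cosmetic feature that every surviving vertex is guaranteed to lie on some feasible short cycle or path. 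The paper also squeezes out the slightly sharper estimate $|S|<2t$ (each component contains at most one altruistic vertex, so at least half of $S$ is non-altruistic), whereas your $|S|\le t(\el+1)$ is coarser but equally sufficient once $\el$ is treated as a constant.
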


\begin{proof}
	Let $(\GG,\BB,\el_p,\el_c,t)$ be an arbitrary instance of the \KE problem. We apply the following reduction rule.

		{\it If there is a vertex $v\in\GG$ which participates neither in any cycle of length at most $\el_c$ nor in any path of length at most $\el_p$ starting from any vertex in \BB, then delete $v$.}

	The correctness of the reduction rule follows from the fact that any vertex that the above reduction rule deletes never participates in any solution (which is a collection of cycles and paths starting from any vertex in \BB of length at most $\el_c$ and $\el_p$, respectively).

 In order to find all the vertices which get removed due to the reduction rule, we do the following:
 \begin{enumerate}
 	\item Initialize an empty set $S$.
 	\item For each vertex  $v$, using Dijkstra’s algorithm, we find the shortest cycle and the shortest path starting from any vertex in \BB which contains $v$. If the length of such a cycle and path is greater than $\el_c$ and $\el_p$, respectively, then we add $v$ to the set $S$.
 	\item Delete all the vertices in $S$ and the graph induced on $\VV\setminus S$ is our reduced instance.
 \end{enumerate}
 These steps can be done in polynomial time as the shortest directed path and cycle in a graph can be found in polynomial time using Dijkstra’s algorithm. Note that the reduction rule is no longer applicable as no vertex from set $S$ was part of any cycle or path starting from any vertex in \BB of length at most $\el_c$ and $\el_p$, respectively, and their removal doesn't affect such cycles and paths.

When the reduction rule is no longer applicable, we show that the number of vertices in the graph is at most $O(t\Delta^ {\max\{\ell_p,\ell_c\}}).$ To see this, let us choose any maximal collection $C$ of cycles and paths (starting from an altruistic vertex) of length at most $\ell_c$ and $\ell_p$. Such a collection can be found by iteratively finding the shortest cycles and paths (starting from an altruistic vertex), adding it to the collection if they satisfy the constraints on the length and deleting them from the input graph. We stop this iterative process when can't find any such cycle and path. \textcolor{black}{This step can be done in polynomial time as the shortest directed path and cycle in a graph can be found in polynomial time using Dijkstra’s algorithm.} Let $W$ be the set of vertices in $C$. Note that the number of altruistic vertices in $W$ is at most $|W|/2$ as each path starting with an altruistic vertex has at least one non-altrusitic vertex and there are no isolated vertices in $W$. We assume $|W|<2t$, otherwise we output a trivial yes instance as the number of non-altruistic vertices in $W$ would be at least $t$. Since $C$ is maximal and the fact that the reduction rule is no longer applicable, every vertex in $\VV\setminus W$ should be within a distance of at most $\max\{\ell_p,\ell_c\}$ from at least one vertex of $C$ in the underlying undirected graph. It follows that the number of vertices in $\VV\setminus C$ is at most $\OO(t\Delta^ {\max\{\ell_p,\ell_c\}})$, which proves the result.
\end{proof}

Hence, \Cref{{thm:kernel}} shows that we can significantly reduce the size of the \KE problem instancs where both $\el_p$ and $\el_c$ are small, and the maximum degree is also small, to contain only $\OO(t)$ vertices. Note that there could be practical reasons to ensure both $\el_p$ and $\el_c$ to be small as pointed out by Dickerson et al.~\cite{dickerson2016position}. Is it possible to get rid of the exponential dependence on $\max\{\el_p,\el_c\}$ in the kernel size in \Cref{thm:kernel}? In particular, does the \KE problem admit a polynomial kernel parameterized by $t+\Delta+\el_p+\el_c$? We answer this question negatively in \Cref{thm:no-kernel}.


\begin{theorem}\label{thm:no-kernel}
The \KE problem does not admit any polynomial kernel with respect to the parameter $t+\Delta+\el_p+\el_c$ unless $\NP \subseteq \coNP\text{/\text{poly}}$.
\end{theorem}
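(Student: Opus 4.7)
The plan is to establish the non-existence of a polynomial kernel via OR-cross-composition from the \NP-hard problem X3C$^\prime$ (used in \Cref{obs:dl}) into \KE parameterized by $t+\Delta+\el_p+\el_c$. By the framework of~\cite{CyganFKLMPPS15}, such a cross-composition rules out polynomial kernels unless $\NP\subseteq\coNP\text{/poly}$, which exactly matches the statement to be proved.

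First, I will define a polynomial equivalence relation $R$ on X3C$^\prime$ instances: two instances are $R$-equivalent iff they have the same number of elements $3q$ and the same number of 3-sets $m$, with a single extra class collecting all malformed strings. This partition has only polynomially many classes, as cross-composition requires. Given $T$ equivalent instances $(X_1,C_1),\ldots,(X_T,C_T)$, the composition algorithm produces one \KE instance as follows. For each $i\in[T]$, build a local gadget $G_i$ by replaying the reduction behind \Cref{obs:dl}, which already witnesses that \KE is \NPH with constant $\el_p$, $\el_c$ and constant $\Delta$, and with an instance-local target $t_0$ polynomial in $\max_i|x_i|$. The $T$ gadgets are then glued via a \emph{selector structure} whose role is to activate at most one gadget in any feasible solution. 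Concretely, a tree-like selector of depth $\lceil\log_2 T\rceil$ built from altruistic sources together with short paths/cycles (of constant length) works: at every internal node the only way to extend a chain is to commit to exactly one of two sub-selectors, so the chain eventually reaches a unique leaf $G_i$.

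The overall target is set to $t=t_0+O(\log T)$, where the $O(\log T)$ term accounts for the patients covered along the chosen root-to-leaf path in the selector. Consequently, $t+\Delta+\el_p+\el_c$ is polynomial in $\max_i|x_i|+\log T$, which is exactly the parameter bound demanded by cross-composition. For correctness: in the forward direction, any exact 3-cover of some $X_i$ lifts via $G_i$ together with the unique chain down to $G_i$ into a collection of disjoint feasible paths and cycles covering exactly $t$ patients; in the backward direction, since any chain from an altruistic selector-root can reach at most one leaf gadget $G_i$, covering $t$ patients forces the restriction to $G_i$ to be a complete exact-cover solution, hence some $x_i$ is a \YES instance.

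The main obstacle will be engineering the selector so that simultaneously (i) the maximum degree stays $O(1)$, (ii) path and cycle length caps remain constants, and (iii) no partial ``leakage'' of a feasible solution can span two distinct gadgets $G_i$, $G_j$. A naive star gluing through one root vertex would force $\Delta=\Omega(T)$ and break the parameter bound, so the tree-like selector is essential; but with a tree one must keep the altruistic chains strictly contained within a single root-to-leaf branch, which is where a careful choice of edge directions and of where the sole altruistic vertex sits in each branch is needed. Verifying the biconditional while respecting these three constraints simultaneously is the delicate part of the argument.
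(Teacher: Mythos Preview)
Your high-level plan (OR-cross-composition through a binary selector tree) matches the paper's, but the choice of source problem creates a real obstacle that you have not resolved. You insist on keeping $\el_p$ and $\el_c$ constant because the X3C$^\prime$ gadgets from \Cref{obs:dl} use $\el_p=0$, $\el_c=3$. Yet in the same breath you describe the selector as a depth-$\lceil\log_2 T\rceil$ tree in which ``the chain eventually reaches a unique leaf $G_i$''. A single chain that traverses $\Theta(\log T)$ levels has length $\Theta(\log T)$, so it cannot exist when $\el_p=O(1)$. If instead you scatter many altruistic sources along the tree so that each chain is short, the selection property collapses: distinct short chains can independently reach several leaves, and nothing forces the solution to concentrate in one $G_i$. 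This is precisely the tension you flag as ``the delicate part'', and the proposal does not contain an idea that resolves it.

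Allowing $\el_p$ to grow to $\Theta(\log T)$ would fix the selector, and it is perfectly compatible with the parameter budget (cross-composition only needs the parameter to be $\text{poly}(\max_i|x_i|+\log T)$, not $O(1)$). But then the X3C$^\prime$ gadgets, which were designed for $\el_p=0$, are exposed to incoming chains from the selector leaf; such a chain can cover gadget vertices without forming the intended $3$-cycles, and the backward direction of the correctness argument breaks. The paper avoids this entirely by composing from \textsc{Hamiltonian Path} rather than X3C$^\prime$: each $\GG_i$ is itself meant to be covered by one long chain, so the selector chain simply continues into $\GG_i$; one sets $\el_p=h+n$, $\el_c=0$, $t=h+n$, $\Delta=n+1$, all of which are $\text{poly}(n+\log k)$. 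Switching your source problem to \HP (or otherwise redesigning the gadgets to be chain-based) is what is missing from your plan.
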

\begin{proof}
	We show that the \HP problem on simple (no self loop) directed graphs cross-composes into the \KE problem. First, we take a polynomial equivalence relation $\mathcal{R}$ that places all malformed instances into one equivalence class, while all the well-formed instances are in the same equivalence class if the instances have the same number of vertices. Now, we present the cross composition algorithm. Given a set of malformed instances of \HP, we provide a trivial no instance of the \KE problem. Now let us assume that we are given $k$ well-formed instances $\GG_1, \ldots, \GG_k$ of \HP from an equivalence class of $\mathcal{R}$ as an input to the cross composition algorithm. Let the number of vertices in each of these $k$ instances be $n$. Let $k'$ be smallest power of 2 greater than or equal to $k$. Let \TT be a complete directed binary tree with $k'$ leaf nodes, root being the only altruistic vertex and the direction of edges in the tree is from the parent to the child. Let $h$ be the height (largest number of edges in a path from the root to a leaf node) of \TT. Note that $h=\OO(\log k)$ as $k'\leq 2k$. The cross-composition algorithm creates an instance \HH of the \KE problem by taking a disjoint union of $\GG_1, \ldots, \GG_k$, adding the tree \TT, and adding an outgoing edge from $i$-th leaf node to every vertex of the $\GG_i$. Note that the maximum degree $\Delta$ of \HH is $n+1$. Clearly, the \KE problem instance $(\HH,\BB=\{u\},\el_p=h+n,\el_c=0, t=h+n)$ is a \YES instance if and only if there exists an integer $i\in[k]$ such that the \HP instance $\GG_i$ is a \YES instance. Since \HP on simple directed graphs is an \NPC problem \cite{garey1979computers}, it follows that the \KE problem does not admit a polynomial kernel with respect to the parameter $t+\Delta+\el_p+\el_c$ unless \caveat~\cite{CyganFKLMPPS15}.
\end{proof}

Hence, if any of $\el_p$ or $\el_c$ is large, which can be the case sometimes in practice also, \Cref{thm:no-kernel} shows that we cannot hope to always significantly reduce the size of the input instance.

\subsection{Approximation Algorithm}

We now present our approximation algorithm for the \KE problem when only cycles of length at most $3$ are allowed; no path is allowed. Biro et al.~\cite{biro2009maximum} studied this problem with the name \MSTE and proved \APX-hardness. A trivial extension from the result on MAX CYCLE WEIGHT$\leq$k-WAY EXCHANGE in Biro et al.~\cite{biro2009maximum} leads to a 2+\eps approximation algorithm for \MSTE. Now, towards designing the approximation algorithm, we use the \TSP problem which is defined as follows.


\begin{definition}[\TSP]
Given a universe \UU, a family
$\mathcal{F} \subseteq 2^\UU$ of sets of size at most $3$, and an integer $k$, compute if there exists a subfamily $\WW\subseteq\mathcal{F}$ of pairwise
disjoint sets such that $|\WW|\ge k$. We denote any instance of \TSP by $(\UU,\FF,k)$.
\end{definition}

The following result is due to~\cite{cygan2013improved}.

\begin{lemma}~\cite{cygan2013improved}.\label{kset}
For every $\eps>0$, there is $(4/3+\eps)$-approximation algorithm for the \TSP problem for optimizing $k$.
\end{lemma}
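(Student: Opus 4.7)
The plan is to use a local-search framework. The algorithm maintains a packing $\WW \subseteq \FF$ of pairwise disjoint sets and repeatedly looks for an \emph{improving move}: a subfamily $A \subseteq \WW$ together with a subfamily $B \subseteq \FF \setminus \WW$ such that $B$ is pairwise disjoint, $(\WW \setminus A) \cup B$ is still pairwise disjoint, and $|B| > |A|$. A classical Hurkens--Schrijver style analysis, where $|A|$ is a constant depending on $\eps$, already yields approximation ratio $3/2 + \eps$, so the non-trivial work is in breaking the $3/2$ barrier.

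To reach $4/3 + \eps$, the key idea is to allow much larger improvement moves whose \emph{swap structure} (the auxiliary graph recording element-sharing between sets in $A \cup B$) has bounded pathwidth $p = p(\eps)$, rather than bounded size. This gives two complementary tasks. On the algorithmic side, I would fix $p$ and search over candidate path-decompositions of the potential swap structure; along each decomposition a standard dynamic program over bags of size $\OO(p)$ finds a best improving $(A,B)$ in time $n^{\OO(p)}$, which is polynomial for each fixed $\eps$. On the analytical side, assuming the current $\WW$ is worse than $(4/3 + \eps) |\WW^\ast|^{-1} \cdot |\WW^\ast|$ for an optimum $\WW^\ast$, one views the symmetric difference $\WW \triangle \WW^\ast$ as a conflict graph and argues, via a charging or probabilistic decomposition, that it decomposes into components of pathwidth at most $p$, at least one of which is itself an improving move; hence no such locally optimal $\WW$ can exist.

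The main obstacle is the structural decomposition: controlling the pathwidth of the components of $\WW \triangle \WW^\ast$ while simultaneously guaranteeing that $|\WW| < (4/3+\eps)^{-1} |\WW^\ast|$ forces a delicate counting trade-off between the number of shared elements and the sizes of $\WW, \WW^\ast$. This is the technically heaviest part of Cygan's proof and is what distinguishes it from older local-search analyses. Finally, to bound the total number of local search iterations, I would use the standard scaling trick of only accepting improvements that yield a $(1+\delta)$ multiplicative gain for a suitable $\delta = \delta(\eps)$, which makes the total running time polynomial in the input size at the cost of an arbitrarily small additional loss that is absorbed into the $\eps$ term.
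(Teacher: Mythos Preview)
The paper does not prove this lemma at all: it is stated as a black-box citation to Cygan~\cite{cygan2013improved} and used only as an ingredient in the approximation algorithm of \Cref{approx}. So there is no ``paper's own proof'' to compare against; the paper's entire treatment of \Cref{kset} is the sentence ``The following result is due to~\cite{cygan2013improved}.''

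Your outline is a reasonable high-level summary of Cygan's actual argument (local search where improving swaps are those whose conflict graph has bounded pathwidth, found by DP over path decompositions, with the $4/3+\eps$ ratio coming from a structural analysis of the symmetric difference with an optimum). That said, as written it is a proof \emph{plan} rather than a proof: you correctly identify the decomposition of $\WW \triangle \WW^\ast$ into bounded-pathwidth pieces as ``the technically heaviest part'' and then do not carry it out. If the goal were to reproduce a self-contained proof of the lemma, this would be a genuine gap, since that decomposition is precisely where the $4/3$ constant comes from and is not at all routine. But given that the paper itself simply cites the result, the appropriate response here is to do the same rather than to attempt a re-derivation.
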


\begin{theorem}\label{approx}
If there is a $\alpha$-approximation algorithm for \TSP problem, then there is a $\frac{4\alpha}{3}$-approximation algorithm for \MSTE.
\end{theorem}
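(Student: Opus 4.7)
The plan is to reduce the \MSTE instance to two separate \TSP instances, apply the given $\alpha$-approximation to each, translate both solutions back to vertex-disjoint collections of $2$- and $3$-cycles in $\GG$, and output whichever covers more non-altruistic vertices. The first instance $I_3$ has universe $\VV$ and one $3$-set per $3$-cycle of $\GG$; the second instance $I_{2,3}$ additionally contains a $2$-set $\{u,v\}$ for every $2$-cycle of $\GG$. A pairwise-disjoint family of sets in either instance translates directly to a vertex-disjoint family of short cycles, so each packing yields a valid \MSTE solution.

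For the analysis, let $c_2$ and $c_3$ denote the numbers of $2$-cycles and $3$-cycles in some optimal \MSTE solution, so that $\OPT = 2c_2+3c_3$. The $3$-cycles of the optimum form a feasible packing of size $c_3$ in $I_3$, so the $\alpha$-approximation returns at least $c_3/\alpha$ disjoint $3$-cycles and hence an \MSTE solution of value at least $3c_3/\alpha$. Likewise, the whole optimum forms a packing of size $c_2+c_3$ in $I_{2,3}$, so the $\alpha$-approximation returns $x_2$ $2$-sets and $x_3$ $3$-sets with $x_2+x_3 \geq (c_2+c_3)/\alpha$, giving an \MSTE solution of value $2x_2+3x_3 \geq 2(c_2+c_3)/\alpha$. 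The algorithm's output value is therefore at least $\tfrac{1}{\alpha}\max\{3c_3,\,2(c_2+c_3)\}$.

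The remaining step is a short case analysis to establish
\[
\max\{3c_3,\,2(c_2+c_3)\}\ \geq\ \tfrac{3}{4}(2c_2+3c_3).
\]
If $c_3 \geq 2c_2$, then $3c_3 - \tfrac{9c_3}{4} = \tfrac{3c_3}{4} \geq \tfrac{3c_2}{2}$, giving $3c_3 \geq \tfrac{3}{4}\OPT$. Otherwise $c_3 \leq 2c_2$, and an analogous rearrangement of $2(c_2+c_3) \geq \tfrac{3}{4}(2c_2+3c_3)$ reduces to $\tfrac{c_2}{2} \geq \tfrac{c_3}{4}$, which holds. Combining with the lower bound from the previous paragraph, the algorithm returns a solution of value at least $\tfrac{3}{4\alpha}\OPT$, yielding the claimed $\tfrac{4\alpha}{3}$-approximation ratio.

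The main obstacle, and the reason the reduction is performed twice, is that running the $\alpha$-approximation only on $I_{2,3}$ can return a packing dominated by $2$-sets, losing a factor of $3/2$ when $\OPT$ is $3$-cycle heavy because maximum cardinality then fails to track number-of-covered-vertices. Running \TSP separately on the $3$-cycle-only instance $I_3$ hedges precisely against this regime; the two lower bounds cross exactly at $c_3 = 2c_2$, which is the tight instance yielding ratio $\tfrac{4\alpha}{3}$.
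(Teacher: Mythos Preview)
Your proof is correct and follows essentially the same approach as the paper: build the two \TSP instances (triangles only, and triangles plus $2$-cycles), run the $\alpha$-approximation on each, and return the better of the two resulting cycle packings. Your analysis via $c_2,c_3$ is a reparametrization of the paper's $\OPT_2=2c_2$, $\OPT_3=3c_3$, yielding the identical bounds $A_1\ge \OPT_3/\alpha$ and $A_2\ge (\OPT-\OPT_3/3)/\alpha$; your case split at $c_3=2c_2$ corresponds exactly to the paper's threshold $\OPT_3=\tfrac{3}{4}\OPT$.
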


\begin{proof}
Let $\GG$ denote the input graph associated with the problem instance of \MSTE. The high level idea of our algorithm is as follows. Using the graph \GG, we create two instances of the \TSP problem, solve them using the approximation algorithm for the \TSP problem, and output the best of the two (corresponding \MSTE) solutions.

{\bf Algorithm $1$:} Let us define $\FF_1=\{\{a,b,c\}\subseteq\VV[\GG]: a, b, c \text{ forms a triangle in }\GG\}$. Using the $\alpha$-approximation algorithm, we compute a solution $\SS_1$ for the \TSP instance $(\VV[\GG],\FF_1)$. Let $\CC_1$ be the set of cycles corresponding to the sets in $\SS_1$. Suppose the total number of vertices covered by the cycles in $\CC_1$ be $A_1$.

{\bf Algorithm $2$:} Let us define $\FF_2=\{\{a,b,c\}\subseteq\VV[\GG]: a, b, c \text{ forms a triangle in }\GG\}\cup\{\{a,b\}\subseteq\VV[\GG]: a,b \text{ forms a $2$ cycle in }\GG\}$. Using the $\alpha$-approximation algorithm, we compute a solution $\SS_2$ for the \TSP instance $(\VV[\GG],\FF_2)$. Let $\CC_2$ be the set of cycles corresponding to the sets in $\SS_2$. Suppose the total number of vertices covered by the cycles in $\CC_2$ be $A_2$.

If $A_1>A_2$, then our algorithm returns $\CC_1$; otherwise our algorithm returns $\CC_2$. Since $|\FF_1|, |\FF_2| = \OO(n^3)$, our algorithm runs in polynomial time.

We now prove approximation ratio of our algorithm. Towards that, let us fix an optimal solution $\CC_{\OPT}$ which covers \OPT number of vertices. Let $\OPT_i$ denote the number of vertices covered by cycles of length $i$ in $\CC_\OPT$ for $i=2,3$. Then, we have $\OPT=\OPT_2+\OPT_3$. The $\alpha$-approximation algorithm used in Algorithm 1 gives us the following.
\[\frac{A_1}{3} \ge \frac{\OPT_3}{3\alpha} \Rightarrow A_1 \ge \frac{\OPT_3}{\alpha}\]

Similarly, the $\alpha$-approximation algorithm used in Algorithm 2 gives us the following.
\begin{align*}
	\frac{A_2}{2} &\ge \frac{1}{\alpha}\left(\frac{\OPT_3}{3}+\frac{\OPT_2}{2}\right)\\
	\Rightarrow A_2 &\ge \frac{2}{\alpha}\left(\frac{\OPT_3}{3}+\frac{\OPT_2}{2}\right)\\
	&= \frac{2}{\alpha}\left(\frac{\OPT}{2}-\frac{\OPT_3}{6}\right) & [\OPT=\OPT_2+\OPT_3]\\
	&= \frac{1}{\alpha}\left(\OPT-\frac{\OPT_3}{3}\right)
\end{align*}

Hence, we can bound $\max\{A_1,A_2\}$ as follows which proves the claimed approximation guarantee of our algorithm.
\[ \max\{A_1,A_2\} \ge \frac{\OPT}{4\alpha/3} \]
\end{proof}

\Cref{approx} immediately gives us the following corollary.
\begin{corollary}
For every $\eps>0$, there is a $(\frac{16}{9}+\eps)$-approximation algorithm for the \KE problem if only cycles of length at most $3$ are allowed (and no paths are allowed).
\end{corollary}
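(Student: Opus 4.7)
The plan is short: combine Lemma~\ref{kset} with Theorem~\ref{approx}. Observe first that \MSTE is precisely the restriction of \KE in which only cycles of length at most $3$ are permitted and no altruistic donors (hence no chains) are present, so any approximation ratio established for \MSTE carries over verbatim to this restricted version of \KE. Thus the corollary reduces to producing a $(\tfrac{16}{9}+\eps)$-approximation for \MSTE.

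Given the desired $\eps>0$, I would introduce an auxiliary parameter $\eps'>0$ and invoke Lemma~\ref{kset} to obtain a polynomial-time $(4/3+\eps')$-approximation algorithm for \TSP. Feeding $\alpha = 4/3 + \eps'$ into Theorem~\ref{approx} then yields an approximation ratio of $\tfrac{4\alpha}{3} = \tfrac{16}{9} + \tfrac{4\eps'}{3}$ for \MSTE. To match the advertised ratio, I would set $\eps' = 3\eps/4$, so that the resulting ratio becomes exactly $\tfrac{16}{9} + \eps$.

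There is essentially no obstacle here. The reduction embedded in the proof of Theorem~\ref{approx} is a black-box invocation of the \TSP approximation on two instances of size $\OO(n^3)$, so polynomial running time is preserved for every fixed $\eps>0$. The only bookkeeping is to verify that the identification of the cycle-length-$\le 3$, no-paths version of \KE with \MSTE is indeed by definition, which is immediate from the statement of \MSTE. Hence the corollary follows directly by plugging Lemma~\ref{kset} into Theorem~\ref{approx}.
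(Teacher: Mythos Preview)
Your proposal is correct and matches the paper's approach: the corollary is stated as an immediate consequence of plugging the $(4/3+\eps')$-approximation for \TSP from Lemma~\ref{kset} into Theorem~\ref{approx}, and your explicit choice $\eps'=3\eps/4$ simply spells out the arithmetic the paper leaves implicit.
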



\section{Conclusion and Open Questions}

In this paper, we have presented \FPT algorithms for the \KE problem with respect to some natural parameters, namely (i) solution size $t$ (the number of recipients receiving a kidney), (ii) treewidth+$\max\{\el_p,\el_c\}$ and (iii) the number of vertex types when $\el_p\leq\el_c$. We also depicted an MSO$_2$ formula for the \KE problem where the length of the formula is upper bounded by a function of $\max\{\el_p,\el_c\}$. For kernelization, we have exhibited a polynomial kernel w.r.t $\Delta+t$ when $\max\{\el_p,\el_c\}$ is $\OO(1)$. We have complemented this result by refuting existence of a polynomial kernel parameterized $\Delta+t+\max\{\el_p,\el_c\}$ unless $\NP \subseteq \coNP\text{/\text{poly}}$. We have finally presented an $(\frac{16}{9}+\epsilon)$-approximation algorithm for the \KE problem in the special case when cycles of length at most $3$ are allowed and no path is allowed.

Our work leaves many interesting open questions. One such question is whether the \KE problem is \FPT w.r.t treewidth only. Another such question is whether the \KE problem is \FPT w.r.t number of vertex types (without any assumption on $\el_c$ and $\el_p$). Another important question is the existence of a polynomial kernel for the \KE problem parameterized by the solution size alone when the maximum allowed length of paths and cycles are the same. Our hardness proof in \Cref{thm:no-kernel} breaks down if we want to allow paths and cycles of the same length. Finally, it is interesting to study whether the running time of our \FPT algorithms can be improved further or are they best possible assuming standard complexity-theoretic assumptions like ETH or SETH.

\bibliographystyle{alpha}
\bibliography{references}

\end{document}